\newtheorem{proposition}{Proposition}
\begin{document}

\newcounter{probcounter}
\newcommand{\refporbcounter}[1]{\refstepcounter{probcounter}\theprobcounter \label{#1}} 


\title{Energy-Efficient RSMA-enabled Low-altitude MEC Optimization Via Generative AI-enhanced Deep Reinforcement Learning}

\author{
Xudong Wang, 
Hongyang Du,
Lei Feng,~\IEEEmembership{Member,~IEEE,}
and Kaibin Huang,~\IEEEmembership{Fellow,~IEEE}

\thanks{\textit{Corresponding author: Lei Feng.}}
\thanks{Xudong Wang and Lei Feng are with the State Key Laboratory of Networking and Switching Technology, Beijing University of Posts and Telecommunications, Beijing 100876, China (e-mail: xdwang@bupt.edu.cn; fenglei@bupt.edu.cn).}
\thanks{Hongyang Du and Kaibin Huang are with the Department of Electrical and Electronic Engineering, University of Hong Kong, Pok Fu Lam, Hong Kong SAR, China (e-mail: duhy@eee.hku.hk; huangkb@hku.hk).}
}



\maketitle

\begin{abstract}
The growing demand for low-latency computing in 6G is driving the use of UAV-based low-altitude mobile edge computing (MEC) systems. However, limited spectrum often leads to severe uplink interference among ground terminals (GTs).
In this paper, we investigate a rate-splitting multiple access (RSMA)-enabled low-altitude MEC system, where a UAV-based edge server assists multiple GTs in concurrently offloading their tasks over a shared uplink. We formulate a joint optimization problem involving the UAV 3D trajectory, RSMA decoding order, task offloading decisions, and resource allocation, aiming to mitigate multi-user interference and maximize energy efficiency. Given the high dimensionality, non-convex nature, and dynamic characteristics of this optimization problem, we propose a generative AI-enhanced deep reinforcement learning (DRL) framework to solve it efficiently. Specifically, we embed a diffusion model into the actor network to generate high-quality action samples, improving exploration in hybrid action spaces and avoiding local optima. In addition, a priority-based RSMA decoding strategy is designed to facilitate efficient successive interference cancellation with low complexity. Simulation results demonstrate that the proposed method for low-altitude MEC systems outperforms baseline methods, and that integrating GDM with RSMA can achieve significantly improved energy efficiency performance.

\end{abstract}

\begin{IEEEkeywords}
Rate-Splitting Multiple Access, Mobile Edge Computing (MEC), Resource Allocation, Deep Reinforcement Learning.
\end{IEEEkeywords}

\section{Introduction}
In the sixth generation (6G) era, the proliferation of Internet of Things (IoT) devices and new immersive applications, such as augmented/virtual reality and holographic telepresence, is driving an unprecedented demand for computational resources at the network edge~\cite{8792135}. These latency-sensitive and computation-intensive services exceed the capabilities of onboard processors in battery-powered ground terminals (GTs), resulting in performance degradation when tasks are executed locally~\cite{8391395}. Mobile edge computing (MEC) offers a viable solution by offloading tasks to proximate edge servers, reducing latency and improving efficiency~\cite{8391395}. However, terrestrial BS-centric MEC is constrained by coverage and capacity, especially in remote or infrastructure-limited environments~\cite{9366869}, or when BSs become overloaded or fail to support edge users. These limitations motivate the exploration of airborne MEC, wherein aerial platforms augment the edge computing infrastructure beyond what ground-based MEC can provide.

In recent years, the low-altitude economy (LAE) has expanded swiftly, with unmanned aerial vehicles (UAVs) emerging as key components in the evolution of mobile communication networks~\cite{10955337,zhao2025generative}. Dynamically deployed UAVs can form LAE networks that augment terrestrial infrastructure with additional communication and computing capabilities. When equipped with computing units, UAVs serve as airborne MEC servers, enabling low-altitude MEC systems that bring computation offloading closer to users~\cite{7932157}. Such UAV-assisted MEC networks improve coverage and service reliability while significantly reducing end-to-end latency and GT energy consumption, as users no longer need to transmit to distant base stations. Nevertheless, supporting multiple offloading users via a shared UAV radio link introduces substantial multi-user interference. When many users simultaneously offload tasks over limited spectrum resources, interference among their uplink transmissions becomes severe and can drastically bottleneck the system performance, especially under strict bandwidth constraints.

Rate-splitting multiple access (RSMA) has emerged as a promising strategy to manage interference and improve spectral efficiency in multi-user networks~\cite{9831440}. RSMA separates a user's transmission into private and common components, allowing for a combination of interference decoding and noise handling strategies~\cite{10813419}. Unlike uplink Non-Orthogonal Multiple Access (NOMA) with fixed user grouping and power allocation, RSMA offers greater flexibility through adjustable message splits and decoding orders, enhancing spectral efficiency. This makes RSMA well-suited for low-altitude UAV MEC, where multiple GTs can offload concurrently over the shared spectrum without orthogonalization~\cite{10059199}. While existing RSMA studies focus on downlink, extending it to uplink UAV-assisted MEC systems offers new potential for interference mitigation. However, this introduces complex design challenges, requiring joint optimization of UAV trajectory, offloading decisions, RSMA decoding order, and power allocation. Addressing these tightly coupled variables remains a significant challenge.

Conventional optimization techniques, such as convex optimization and heuristic algorithms, are widely used for joint resource and trajectory design but often suffer from high complexity and local optima~\cite{10759093}. Deep reinforcement learning (DRL) offers an online alternative, capable of learning near-optimal policies for joint scheduling and control in high-dimensional environments~\cite{10192095}. However, DRL faces challenges in continuous action spaces, including poor sample efficiency and convergence instability~\cite{you2025reacritic}. To address these drawbacks, some studies incorporate generative diffusion models (GDMs), which is a class of generative AI (GenAI) techniques adept at modeling complex distributions through iterative sampling~\cite{10529221,wang2024generative,10812969,you2025dress}. By iteratively refining random samples to match an intended distribution, diffusion models excel at generating high-quality solutions even in large search spaces~\cite{10812969}. This motivates our proposed RSMA-assisted low-altitude MEC framework, combining generative AI-enhanced DRL for improved energy efficiency. 

\subsection{Related Works}

Low-altitude MEC systems are characterized by dynamic node locations, limited computational resources, and a strong dependence on communication link quality, making UAV trajectory planning and computation offloading particularly challenging. As a result, recent studies have investigated the joint optimization of UAV trajectories and resource allocation in low-altitude MEC systems~\cite{9814972,9384267,9672750,10546306}. 
For example, the authors in~\cite{9814972} designed a joint task offloading and trajectory control scheme for UAV-assisted energy-harvesting MEC systems to minimize propulsion energy while maintaining queue stability. The authors in~\cite{9384267,9672750} proposed joint optimization of communication resources and UAV trajectories to improve offloading throughput or reduce energy consumption under quality of service (QoS) and energy constraints. In~\cite{10546306}, a bi-objective ant colony algorithm addressed delay and control cost minimization problem.
These studies provide valuable insights into low-altitude edge resource management. However, transmission frameworks based on orthogonal multiple access (OMA) struggle to accommodate the explosive growth in network density, particularly in spectrum-constrained low-altitude networks, often leading to degraded performance in interference management and task completion rates.

Leveraging its capability to mitigate multi-user interference and enhance spectral efficiency, RSMA has been incorporated into low-altitude networks to improve communication performance under interference-limited conditions~\cite{10411132,9585491,9151975,10742902,kim2023energy}. In~\cite{10411132}, an alternating optimization strategy for RSMA-enabled low-altitude networks was proposed to maximize system throughput while ensuring user rate requirements. The work in~\cite{9585491} demonstrated that RSMA significantly improves physical-layer security in a dual-UAV downlink scenario. Additionally, RSMA has been introduced into low-altitude MEC systems to further improve offloading performance. Specifically, the authors in~\cite{9151975} designed a UAV relay architecture and propose a data-stream splitting scheme that minimizes UAV energy consumption under task latency constraints. In~\cite{10742902}, a joint passive reflecting units and dynamic rate-splitting optimization framework was proposed for uplink RSMA-enabled UAV-MEC systems to minimize total UAV power consumption. The authors in~\cite{kim2023energy} applied RSMA to a UAV-MEC system assisted by a reconfigurable intelligent surface (RIS) with wireless power transfer and observed that the RSMA-based scheme achieves more reliable and efficient task offloading in the presence of strong co-channel interference. Dynamic low-altitude MEC systems must adapt to time-varying channels and fluctuating interference levels, but traditional iterative optimization methods face scalability challenges when dealing with highly coupled decision variables in such dynamic environments.

Data-driven optimization methods, particularly DRL, have emerged as powerful solutions for complex optimization problems in low-altitude MEC systems due to their dynamic adaptability and global exploration capabilities. For example, the authors in~\cite{9968236} employed DRL to jointly adjust UAV flight paths and offloading decisions, effectively reducing the system's energy consumption and task latency compared to heuristic baselines. To further improve solution quality and avoid local optima, GDMs have been introduced to augment DRL in wireless network optimization~\cite{10841385,zhao2025temporal}. By leveraging the diffusion-based generative process, agents can enhance their exploration of complex state spaces and refine reward evaluation during training~\cite{zhang2025improve,10852212,10892236,10472660}. For instance, the authors in~\cite{zhang2025improve} introduced a diffusion-assisted DRL framework that generates diverse state samples and more informative reward signals, achieving faster convergence and lower computational overhead compared to standard DRL in a dynamic resource allocation task. Similarly, in~\cite{10472660}, a diffusion-enhanced DRL framework was applied to a wireless sensing-guided edge network to derive optimal pricing strategies, thereby maximizing user utility. Therefore, given the generative capability, integrating diffusion models with DRL to address joint trajectory and resource allocation in RSMA-enabled low-altitude MEC systems presents a promising research direction.

\subsection{Motivations and Contributions}

This paper presents a joint optimization framework for uplink RSMA-enabled low-altitude MEC systems to maximize energy efficiency. We proposed a GenAI-enhanced DRL algorithm to jointly optimize UAV trajectory, user offloading, RSMA decoding order, and power allocation. By embedding a GDM generator into the DRL agent, the proposed algorithm effectively explores complex solution spaces and mitigates local optima issues common in conventional approaches. The main contributions are summarized as follows:
\begin{itemize}
    \item We develop a uplink RSMA-assisted low-altitude MEC system, where a UAV equipped with an mobile edge server provides computing services to multiple GTs, and the task data of each GT is split into multiple sub-messages that are concurrently transmitted with different powers. Then we formulate a novel non-convex optimization problem that maximizes energy efficiency by jointly optimizing the UAV's flight trajectory, user offloading decisions, and RSMA communication parameters.
    \item We propose a generative AI-enhanced deep reinforcement learning framework, termed GDRS, to efficiently solve the high-dimensional, non-convex optimization problem in RSMA-enabled UAV-MEC systems. The problem is decomposed into a decoding-order sub-problem and a continuous control problem. To reduce the decoding complexity, we derive a lightweight priority-based RSMA strategy that dynamically orders sub-messages based on channel quality and user QoS demands, enabling interference cancellation with low overhead.
    \item To overcome the challenges of poor exploration and local optima, we embed a generative diffusion model into the policy network. This integration allows the agent to generate structured candidate actions, including UAV trajectories and resource allocation decisions, leading to more stable convergence and higher policy quality across dynamic and coupled state-action spaces.
    \item We carried out a comprehensive set of simulations to validate the proposed GDRS and to assess the scalability of the RSMA framework. The results show that our RSMA-enabled framework achieves higher throughput and energy efficiency compared to NOMA-based schemes. Moreover, the GenAI-enhanced DRL method outperforms conventional DRL approaches in terms of reward, highlighting the superior exploration capability of the diffusion model.
\end{itemize}

The rest of this paper is organized as follows. Section~\ref{sec_model} introduces the system model and formally defines the optimization problem. In Section~\ref{algorithm}, we elaborate on the algorithmic framework developed to address the problem. Section~\ref{sec_simulation} showcases simulation results and provides an in-depth discussion of the numerical findings. Section~\ref{sec_conclusion} highlights the key contributions and draws the final conclusions of the paper.


\section{System Model And Problem Formulation}\label{sec_model}

As illustrated in Fig.~\ref{fig:system_model}, we investigate an RSMA-assisted low-altitude MEC systems, where the UAV is despatched to serve $\mathcal{K}\triangleq\{1,2,\ldots,K\}$ GTs. 

\begin{figure}[!t]
\centering
\includegraphics[width=0.38\paperwidth]{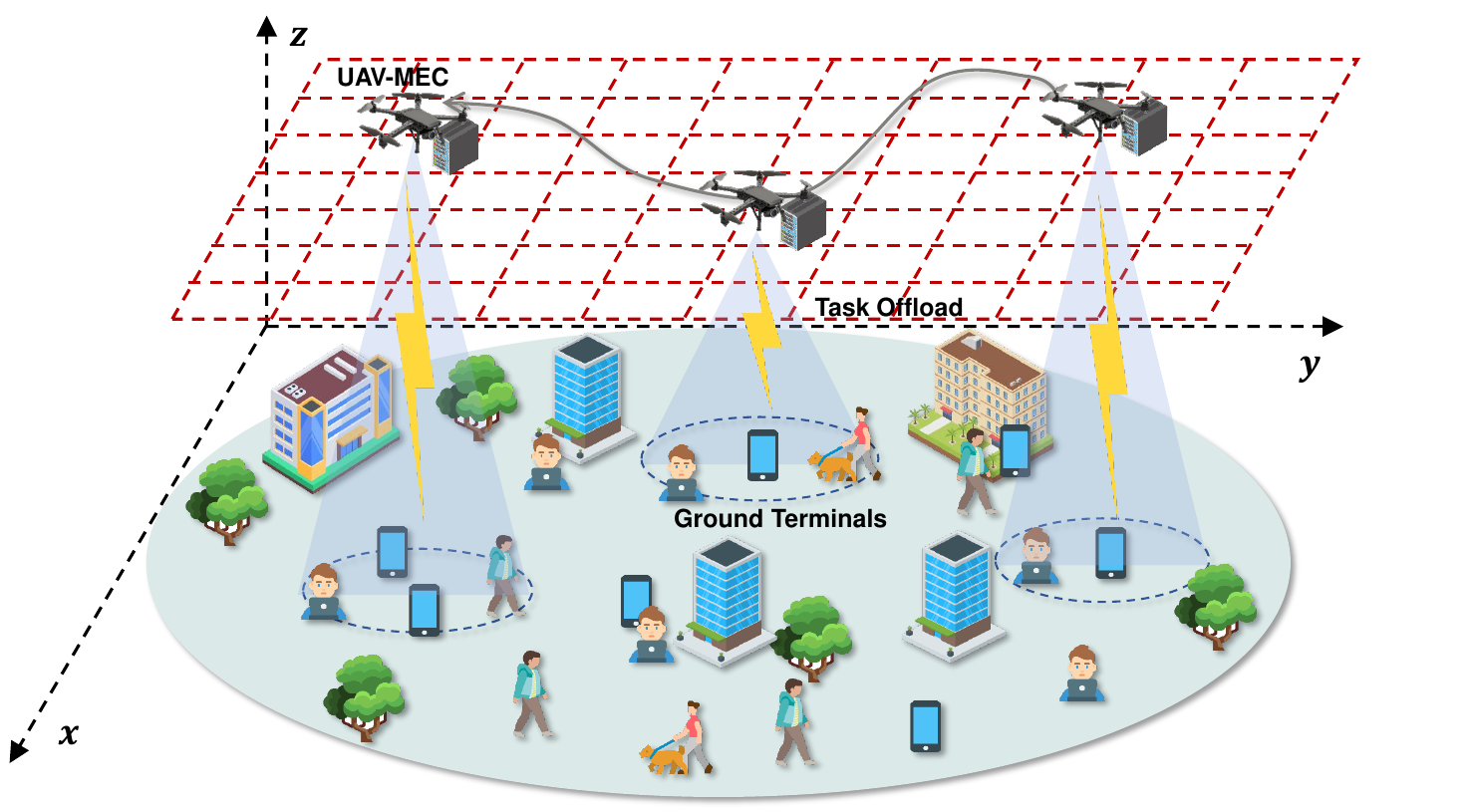}
\caption{RSMA-enabled multi-user low-altitude MEC networks.}
\label{fig:system_model}
\end{figure}

\subsection{Network Model}

The task area is partitioned into a grid of cells with identical area. The horizontal coordinate of the center of the $i$-th cell is represented by $D_{i}^{\ell} = [x_i, y_i]^T \in \mathbb{R}^{2\times1}$, where $\mathcal{L} \triangleq \{1,2,\ldots,L\}$ denotes the set of all cell indices. The spacing between the centers of neighboring cells along the $x$-axis and $y$-axis are denoted by $x_s$ and $y_s$, respectively. The UAV's horizontal position is represented by $D_n^u \in \mathcal{L}$, where $n \in \mathcal{N} \triangleq \{1,2,\ldots,N\}$ and $N$ represents the total number of discrete time slots.
Each time slot is associated with a UAV flight duration level $t_n^u \in \mathcal{T} \triangleq \{1,2,\ldots,T\}$, where the actual duration is bounded by $t_{\min} \leq t_n^u \times \Delta t \leq t_{\max}$. Here, $t_{\min}$ and $t_{\max}$ define the minimum and maximum allowable flight durations, and $\Delta t = \lfloor t_{\max}/T \rfloor$ denotes the time resolution per level. The initial and final UAV positions are fixed and denoted as $D_0^u$ and $D_e^u$, respectively. Accordingly, the horizontal UAV path can be approximated as the sequence $\{D_0^u, D_1^u, \ldots, D_n^u, \ldots, D_N^u, D_e^u\}$.
To capture the UAV's vertical mobility, the altitude at time slot $n$ is represented by $h_n^u \in \mathcal{H} \triangleq \{1,2,\ldots,H\}$. The actual height must satisfy $h_{\min} \leq h_n^u \times \Delta h \leq h_{\max}$, where $h_{\min}$ and $h_{\max}$ denote the minimum and maximum permissible heights, respectively, and $\Delta h = \lfloor h_{\max}/H \rfloor$ denotes the step size between discrete altitude levels. Consequently, the complete UAV trajectory consists of $N$ discrete 3D waypoints $[D_n^u,h_n^u]$ and the associated time durations $t_n^u$, $\forall n \in \mathcal{N}$.

\subsection{Mobility Model}

In UAV-assisted MEC systems, GTs are typically either stationary or exhibit low mobility. Thus, the position of each GT is assumed to be known a priori and represented by $u_{k}=[x_{k},y_{k}]^{T}\in\mathbb{R}^{2\times1},k\in\mathcal{K}$ throughout the entire operation period.
The horizontal velocity of UAV is expressed as
\begin{equation}
v_{n}^{h}=\frac{\|D_{n+1}^{u}-D_{n}^{u}\|}{t_{n}^{u}}\leq V_{\mathrm{max}}^{h}, \forall n\in\mathcal{N},
\end{equation}
where $V_{\mathrm{max}}^{h}$ represents the maximum horizontal velocity. Besides, the vertical flying velocity is expressed as
\begin{equation}
v_{n}^{v}=\frac{\|h_{n+1}^{u}-h_{n}^{u}\|}{t_{n}^{u}}\leq V_{\mathrm{max}}^{v}, \forall n\in\mathcal{N},
\end{equation}
where $V_{\mathrm{max}}^{v}$ denotes the UAV's maximum vertical velocity. When $v_{n}^{h} = 0$, the UAV is either hovering or performing steady straight-and-level flight during time slot $n$.

Based on~\cite{8883173}, the propulsion energy consumption of the UAV is given by
\begin{equation}\label{propulsion_energy}
\begin{aligned}
E^{\text{uav}} = &\sum_{n=1}^{N} t_n^u \Bigg(
    W_0 \left(1 + \frac{3(v_n^h)^2}{U_{\text{t}}^2} \right)
    + \frac{1}{2} F_0 \rho g M (v_n^h)^3 \\
    &+ W_1 \left( 
        \sqrt{1 + \frac{(v_n^h)^4}{4 \bar{v}^4}} - \frac{(v_n^h)^2}{2 \bar{v}^2} \right)^{\frac{1}{2}}+ W_2 v_n^v \Bigg),
\end{aligned}
\end{equation}
where $W_0$ and $W_1$ represent the constant blade profile power and the induced power under hovering conditions, while $W_2$ denotes the constant power required during ascent or descent. The parameters $F_0$ and $g$ represent the fuselage drag ratio and the rotor solidity. In addition, $\rho$ refers to the air density, and $M$ stands for the area of the rotor disc. The parameter $U_{\mathrm{t}}$ denotes the tip velocity of the rotor blade, and $\bar{v}$ signifies the tip velocity of the rotor blade, and $\bar{v}$ refers to the average induced airflow velocity in hovering flight. 

\subsection{Communication Model}

Based on the air-to-ground propagation model for urban scenarios~\cite{6863654}, the probability of establishing the LoS connection between the UAV and GT $k$ can be expressed as
\begin{equation}
\jmath_{k,n}=\frac{1}{1+\alpha\exp\left(-\beta\left(\arctan\left(\frac{h_n^u}{l_{k,n}}\right)-\alpha\right)\right)},
\end{equation}
where $\alpha$ and $\beta$ are environment-dependent constants, $l_{k,n}=\sqrt{(x_i-x_k)^2+(y_i-y_k))^2}$. Based on this, the corresponding pathloss can be expressed as
\begin{equation}
d_{k,n}=20\mathrm{log}\left(\sqrt{(h_n^u)^2+l_{k,n}^2}\right)+A_1\times \jmath_{k,n}+A_2,
\end{equation}
where $A_1=\zeta_{\mathrm{LoS}}-\zeta_{\mathrm{NLoS}}$, $A_2=20\mathrm{log}(\frac{4\pi f_c}{c})+\zeta_{\mathrm{NLoS}}$, $f_c$ denotes the carrier frequency; $c$ represents the speed of light, $\zeta_{\mathrm{LoS}}$ and $\zeta_{\mathrm{NLoS}}$ are the environment-dependent path loss factors for LoS and NLoS conditions, respectively. We define $a_{k,n} \in \{0,1\}$ as the offloading indicator of GT $k$ at time slot $n$, with $a_{k,n}=1$ representing task offloading to the UAV and $a_{k,n}=0$ indicating local computation. To support concurrent offloading, RSMA is applied. Specifically, the message of each GT $w_{k,n}$ is split into $\left |\mathcal{I}  \right |$ sub-messages $w_{k,n,i}$, each allocated a transmit power $p_{k,n,i}$ to achieve rate splitting, such that $\sum_{i \in \mathcal{I}}p_{k,n,i} \leq P_{\mathrm{max}}, \forall k, n$, where $P_{\mathrm{max}}$ is the maximum uplink transmit power of each GT. The composite message is $w_{k,n}=\sum_{i \in \mathcal{I}}\sqrt{p_{k,n,i}}w_{k,n,i},\forall k, n.$ Thus, the received signal of UAV at time slot $n$ is obtained as
\begin{equation}
\begin{aligned}
y_{n}&=\sum_{k\in\mathcal{K}}a_{k,n}\sqrt{g_{k,n}}w_{k,n}+n_{0}\\
&=\sum_{k\in\mathcal{K}}a_{k,n}\sum_{i \in \mathcal{I}}\sqrt{g_{k,n}p_{k,n,i}}w_{k,n,i}+n_0,
\end{aligned}
\end{equation}
where $g_{k,n}$ denotes the channel gain between $k$-th GT and UAV, given by $g_{k,n} = 10^{-d_{k,n}/10}$, where $n_0$ represents the additive white Gaussian noise (AWGN). To decode multiple sub-messages, the UAV employs successive interference cancellation (SIC). Let $\pi_{k,n,i},i\in \mathcal{I}$ indicate the SIC decoding order of sub-message $w_{k,n,i}$, and let $\boldsymbol{\pi}=\{\pi_{k,n,i},\forall k, n, i\}$ denote a decoding permutation, which contains all valid SIC decoding sequences for the $\left |\mathcal{I}  \right |K$ sub-messages. Under this decoding strategy, the achievable uplink rate for sub-message $w_{k,n,i}$ is obtained as~\cite{10742902}
\begin{equation}
R_{k,n,i}=B\log_{2}\left(1+\frac{g_{k,n}p_{k,n,i}}{BN_0+\sum_{(k^{\prime},i^{\prime})\in\mathcal{W}_{k,n,i}}g_{k^{\prime},n}p_{k^{\prime},n,i^{\prime}}}\right),
\end{equation}
where $B$ denotes the system bandwidth, $N_0$ represents the noise power spectral density. The set $\mathcal{W}_{k,n,i}=\{(k^{\prime},i^{\prime})|a_{k^{\prime},n}=a_{k,n}, \pi_{k^{\prime}, n, i^{\prime}} > \pi_{k, n, i}, \forall k^{\prime}, i^{\prime} \}$ contains the sub-messages that are decoded after $w_{k,n,i}$ based on the SIC decoding order at the UAV.

\subsection{Computing Model}
In the considered RSMA-enabled low-altitude MEC systems, each GT has an estimated computational task $G_{k}=\{C_{k},D_{k}\}$, where $C_{k}$ represents the total required CPU cycles, $D_k$ denotes the overall data size to be calculated. As discussed in~\cite{8647789}, when GT $k$ chooses to offload its task to the UAV during time slot $n$, the expected amount of data successfully transmitted in that slot is obtained as
\begin{equation}
m_{k,n}= \sum_{i \in \mathcal{I}} a_{k,n}\kappa_nt_n^uR_{k,n,i},
\end{equation}
where $\kappa_n \in [0,1]$ represents the portion of time slot $n$ that is assigned to data transmission. To ensure that the transmitted data $m_{k,n}$ can be processed by the UAV's onboard CPU with a processing rate of $f_u$, $\kappa_n$ must be properly chosen. This requirement imposes the following constraint:
\begin{equation}
\sum_{i \in \mathcal{I}}\kappa_nt_n^uR_{k,n,i}\leq f_u(1-\kappa_n)t_n^u\frac{D_k}{C_k}.
\end{equation}
To further enhance the total amount of processed data, $\alpha_n$ is further constrained as
\begin{equation}\label{maximization_portion}
\kappa_n=\frac{f_uD_k}{\sum_{i \in \mathcal{I}}R_{k,n,i}C_k+f_uD_k}.
\end{equation}

Given the offloading decision $a_{k,n}$ at time slot $n$ and (\ref{maximization_portion}), the amount of task data $G_k$ processed during slot $n$ is represented as
\begin{equation}\label{task_data}
\varpi_{k,n}=a_{k,n}\frac{f_uD_kt_n^u\sum_{i \in \mathcal{I}}R_{k,n,i}}{\sum_{i \in \mathcal{I}}R_{k,n,i}C_k+f_uD_k}+(1-a_{k,n})t_n^uf_g,
\end{equation}
where $f_g$ denotes the processing speed of the GT's CPU. In (\ref{task_data}), the delay due to downlink transmission from UAV to GT is neglected, assuming that the corresponding data size is negligible. Accordingly, the overall system energy efficiency can be defined as $\eta = \sum_{k=1}^{K}\sum_{n=1}^{N} \varpi_{k,n}/E^{\text{uav}}$.

\subsection{Problem Formulation}

We aim to maximize the system energy efficiency $\eta$ while maintaining reliable MEC performance by jointly optimizing the UAV trajectory, task offloading decisions, uplink power allocation, and SIC decoding order.
Let $\mathbf{A}=\{a_{k,n},\forall k\in\mathcal{K}, \forall n\in\mathcal{N}\}$, $\mathbf{P}=\{p_{k,n,i}, \forall k\in\mathcal{K}, \forall n\in\mathcal{N}, \forall i\}$, $\mathbf{D}=\{D_{n}^{u},\forall n\in\mathcal{N}\}$, $\mathbf{h}=\{h_{n}^{u},\forall n\in\mathcal{N}\}$, and $\mathbf{t}=\{t_{n}^{u},\forall n\in\mathcal{N}\}$, the optimization problem can then be formulated as
\begin{subequations}\label{optimization_problem}
	\begin{align}
  \mathcal{P}0: 
   &\mathop{\mathrm{max}}\limits_{\mathbf{A}, \mathbf{P}, \mathbf{D}, \mathbf{h}, \mathbf{t}, \boldsymbol{\pi}} \ 
		 \eta
		\\
		\mathrm{s.t.} \ 
		\mathrm{C1:} &\ a_{k,n}=\{0,1\},\forall k\in\mathcal{K},\forall n\in\mathcal{N}, \\
        \mathrm{C2:} &\ \sum_{n\in\mathcal{N}}\varpi_{k,n}\geq D_k,\forall k\in\mathcal{K}, \\
        \mathrm{C3:} &\ \sum_{i \in \mathcal{I}}p_{k,n,i} \le P_{\mathrm{max}}, \forall k\in\mathcal{K},\forall n\in\mathcal{N}, \\
        \mathrm{C4:} &\ \frac{\left\|D_{n+1}^u-D_n^u\right\|}{t_n^u}\leq V_{\max}^h,\forall n\in\mathcal{N}, \\ 
        \mathrm{C5:} &\ \frac{\left\|h_{n+1}^u-h_n^u\right\|}{t_n^u}\leq V_{\max}^v,\forall n\in\mathcal{N}, \\
        \mathrm{C6:} &\ h_{\min}\leq h_n^u\times \Delta h\leq h_{\max},\forall n\in\mathcal{N}, \\ 
        \mathrm{C7:} &\ t_{\min}\leq t_n^u\times \Delta t\leq t_{\max},\forall n\in\mathcal{N}, \\
        \mathrm{C8:} &\ p_{k,n,i} \ge 0, \forall k\in\mathcal{K},\forall n\in\mathcal{N}, \forall i\in\mathcal{I}, \\
        \mathrm{C9:} &\ \boldsymbol{\pi} \in \Pi, \\
        \mathrm{C10:} &\ \sum_{i\in\mathcal{I}} R_{k,n,i} \geq R_{\mathrm{min}}, \forall k \in \mathcal{K}, \forall n \in \mathcal{N} \\
        \mathrm{C11:} & \ R_{k,n,1}:R_{k,n,2}:\ldots :R_{k,n,I} = \notag  \\
        &\mu_{k,n,1}:\mu_{k,n,2}:\ldots :\mu_{k,n,I}, \forall k\in\mathcal{K}, \forall n \in \mathcal{N},
        \end{align}
\end{subequations}
where $\mu_{k,n,i}$ denotes a predefined non-negative coefficient that allocates a proportion of the rate $R_{k,n}$ to the $i$-th sub-message of GT $k$ in time slot $n$, satisfying $R_{k,n,i} = \mu_{k,n,i}R_{k,n}, \forall i$ and $\sum_{i \in \mathcal{I}} \mu_{k,n,i} = 1, \forall k, n$. The minimum rate requirement for each GT is denoted by $R_{\mathrm{min}}$. Constraints in $\mathcal{P}0$ are interpreted as follows: C1 models offloading as a binary decision problem; C2 ensures tasks are completed within the UAV's mission time; C3 limits the total transmit power of each GT; C4 $\sim$ C7 describe UAV mobility constraints; C8 enforces non-negative transmission powers; C9 guarantees feasible SIC decoding; C10 enforces the rate requirement; and C11 ensures valid rate splitting among sub-messages.

The optimization problem $\mathcal{P}0$ is a non-convex mixed-integer program due to the discrete offloading variable $\mathbf{a}$, the non-convex objective, and constraint C2, making it difficult to solve optimally. Moreover, the UAV's complex energy model further complicates trajectory and flight time design for energy efficiency maximization. Prior work has shown that solving $\mathcal{P}0$ via successive convex approximation (SCA) incurs high computational cost, rendering it impractical~\cite{8883173}. To overcome these challenges, we derive the optimal decoding order policy, and then develop a GenAI-based optimization framework, detailed in the following section.

\section{The Proposed GDRS Algorithm}\label{algorithm}

To efficiently address problem $\mathcal{P}0$, we decompose it into two sub-problems: 1) the decoding order sub-problem $\mathcal{P}1$, and 2) the joint UAV 3D trajectory, power allocation, and task offloading sub-problem $\mathcal{P}2$. For $\mathcal{P}1$, the optimal decoding policy $\boldsymbol{\pi}$ is determined based on channel gains. Subsequently, $\mathcal{P}2$ is reformulated as a Markov decision process (MDP), and a generative AI-enhanced DRL framework is developed to optimize $\{\mathbf{A}, \mathbf{P}, \mathbf{D}, \mathbf{h}\}$.

\subsection{Decoding Order Policy}

The optimal decoding order plays a critical role in mitigating inter-submessage interference and enhancing overall energy efficiency, and thus must be carefully considered. Given a fixed UAV trajectory $\mathbf{D}$, altitude profile $\mathbf{h}$, offloading decisions $\mathbf{A}$, power allocation $\mathbf{P}$, and UAV flight time $\mathbf{t}$, $\mathcal{P}0$ is rewritten as
\begin{subequations}\label{subproblem_1}
	\begin{align}
  \mathcal{P}1: 
  &\mathop{\mathrm{max}}\limits_{\boldsymbol{\pi}} \ 
		 \eta
		\\
		&\quad \mathrm{s.t.} \ 
		\mathrm{C9}, \mathrm{C10}, \mathrm{C11}.
        \end{align}
\end{subequations}
Solving the decoding order optimization problem $\mathcal{P}1$ is challenging due to its non-convex objective function and the intricate interdependence of discrete variables, rendering it a mixed-integer nonlinear programming (MINLP) problem that is typically hard to solve. While exhaustive search methods~\cite{9257190} can yield the optimal solution, their computational cost scales exponentially with the number of sub-messages, making them impractical for multi-user systems. Existing works~\cite{9240028,9203956,8892657} often adopt channel gain-based heuristics for decoding order. To better balance computational efficiency and system performance, we propose a decoding strategy that jointly considers both the channel gains and the rate-splitting proportions of each GT's sub-messages.
\begin{proposition}\label{proposition_decoding_order}
The decoding order of sub-messages is determined by sorting $\upsilon_{k,n,i} \triangleq \left|{g}_{k,n}\right|^{2}\left(1+\frac{1}{\rho_{k,n,i}^{\min}}\right)\quad=\quad\left|g_{k,n}\right|^{2}\left(1+\frac{1}{2^{\left(\mu_{k,n,i}R_{\min}\right)}-1}\right)$ in descending order, where $\rho_{k,n,i}^{\min}$ represents the minimum signal-to-interference-plus-noise ratio (SINR) required for GT $k$ to decode the $i$-th sub-message at time slot $n$.
\end{proposition}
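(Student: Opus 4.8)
\emph{Proof plan.} The plan is to establish the ordering rule through an adjacent-transposition (exchange) argument applied to a feasibility reformulation of $\mathcal{P}1$. First I would fix the trajectory, altitude, offloading, power, and time variables, so that the only free variable is the permutation $\boldsymbol{\pi}$. Because $\eta$ is monotonically increasing in every per-GT rate $R_{k,n}=\sum_{i\in\mathcal{I}}R_{k,n,i}$, and the aggregate decodable rate over a common interference stack is invariant to the SIC order (the multiple-access sum-rate identity), the permutation influences $\eta$ only through which per-sub-message rate requirements can be met. Constraints C10 and C11 collapse C10 into a per-sub-message requirement $R_{k,n,i}\ge \mu_{k,n,i}R_{\min}$, equivalently a minimum SINR $\rho_{k,n,i}^{\min}=2^{\mu_{k,n,i}R_{\min}}-1$. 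Hence the task reduces to selecting the permutation that renders this system of minimum-SINR constraints feasible, and the claim is that sorting $\upsilon_{k,n,i}$ in descending order yields such a permutation whenever one exists.

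Next I would convert each SINR constraint into a threshold on cumulative interference. Writing $T_{k,n,i}=BN_0+\sum_{(k',i'):\,\pi_{k',n,i'}\ge \pi_{k,n,i}}g_{k',n}p_{k',n,i'}$ for the noise plus the aggregate received power of $w_{k,n,i}$ together with every sub-message decoded no earlier than it, the condition $\mathrm{SINR}_{k,n,i}\ge\rho_{k,n,i}^{\min}$ rearranges to $T_{k,n,i}\le g_{k,n}p_{k,n,i}\big(1+1/\rho_{k,n,i}^{\min}\big)=p_{k,n,i}\,\upsilon_{k,n,i}$. The key structural fact is that, along any decoding order, the quantities $T_{(\cdot)}$ form a strictly decreasing chain as SIC peels off sub-messages, so the earliest-decoded sub-message faces the largest threshold. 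Feasibility is therefore a nested-threshold assignment, and the natural greedy rule is to let the sub-message with the largest decoding capacity $p_{k,n,i}\upsilon_{k,n,i}$ occupy the earliest, highest-interference decoding slot.

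Finally I would prove optimality of the descending-$\upsilon$ order by a bubble-sort exchange. Suppose a feasible order places two adjacent sub-messages out of $\upsilon$-order. Swapping them leaves the interference seen by every other sub-message unchanged, because the set of sub-messages decoded no earlier than the pair's first slot is unaffected; only the pair's own two constraints are modified. The sub-message promoted to the earlier slot keeps its constraint satisfied with extra slack, since its capacity exceeds the unchanged threshold at that slot, while the demoted one now faces a strictly smaller $T$ and hence also remains decodable. Each corrective swap thus preserves feasibility, and iterating drives the order to the descending-$\upsilon$ arrangement, proving it is feasible whenever any order is. I expect the main obstacle to be the bookkeeping behind this invariance---rigorously showing that only the transposed pair's constraints change and that both modified inequalities continue to hold---together with reconciling the power weighting: the exact capacity is $p_{k,n,i}\upsilon_{k,n,i}$, so to land precisely on the stated key $\upsilon_{k,n,i}=|g_{k,n}|^2\big(1+1/(2^{\mu_{k,n,i}R_{\min}}-1)\big)$ one must invoke the uniform per-sub-message power normalization implied by the rate-split proportions, which is the step demanding the most care.
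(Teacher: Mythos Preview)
Your exchange argument is close in spirit to the paper's, but your reduction step contains a genuine gap. You argue that because the aggregate SIC sum rate is permutation-invariant, the order can affect $\eta$ \emph{only} through feasibility of the constraints C10--C11. That inference is not correct here: the objective depends on the per-GT rates $R_{k,n}=\sum_{i}R_{k,n,i}$ through the concave map $\varpi_{k,n}\propto f_uD_kR_{k,n}/(R_{k,n}C_k+f_uD_k)$, and the decoding order redistributes rate \emph{across} GTs while holding only the grand total fixed. Two feasible orders can therefore yield different values of $\sum_k\varpi_{k,n}$, so collapsing $\mathcal{P}1$ to a pure feasibility problem loses the objective. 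Your downstream bubble-sort argument for feasibility is sound on its own terms, but it does not by itself establish the proposition as stated.

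The paper's route differs precisely at the point you flag as delicate. Rather than holding $\mathbf{P}$ fixed and inheriting the weighted key $p_{k,n,i}\upsilon_{k,n,i}$, the paper performs the adjacent-pair comparison with the two powers \emph{free} within $[0,P_{\max}]$ and subject to the two SINR floors, and compares the maximal achievable received signal strength $\chi=|g_a|^2p_a+|g_b|^2p_b$ under a common cumulative-interference cap. Because both sub-messages share the same power ceiling $P_{\max}$, the bound on $\chi$ in each ordering collapses to a term of the form $|g_\cdot|^2P_{\max}(1+1/\rho_\cdot^{\min})$, which is exactly $\upsilon$ up to the common factor $P_{\max}$---this is how the power weight disappears without any appeal to a ``uniform per-sub-message power normalization.'' The paper then reads off that decoding the larger-$\upsilon$ sub-message first dominates, and explicitly labels the resulting global order \emph{near-optimal} rather than optimal. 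If you want to repair your argument, either (i) carry the power-weighted key and add the uniform-$P_{\max}$ step at the end, or (ii) adopt the paper's device of letting the pair's powers float inside the exchange so that $P_{\max}$ absorbs the weighting.
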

\begin{proof}
Since the decoding order policy $\boldsymbol{\pi}$ influences the achievable data rate but does not impact the energy consumption of UAV, the optimization problem $\mathcal{P}1$ is equivalently reformulated as
\begin{subequations}\label{subproblem_3_1}
	\begin{align}
  \mathcal{P}1.1: 
  &\mathop{\mathrm{max}}\limits_{\pi_{k,n,i}} \ 
		 \sum_{k \in \mathcal{K}}\varpi_{k,n}
		\\
		&\quad \mathrm{s.t.} \ 
		\mathrm{C9}, \mathrm{C10}, \mathrm{C11}.
        \end{align}
\end{subequations}
As indicated in~(\ref{task_data}), for given UAV trajectory $\mathbf{D}$, altitude profile $\mathbf{h}$, offloading decisions $\mathbf{A}$, power allocation $\mathbf{P}$, and flight time $\mathbf{t}$, the data processed $\varpi_{k,n}$ depends solely on the sum rate $\sum_{i \in \mathcal{I}} R_{k,n,i}$. Hence, maximizing $\varpi_{k,n}$ reduces to maximizing the aggregate rate. To this end, we aim to design a decoding order policy that maximizes the sum rate at each time slot. Let $w_a$ and $w_b$ be two consecutive sub-messages in the decoding sequence, associated with respective channel gains $g_a$ and $g_b$, and corresponding power levels $p_a$ and $p_b$. If the cumulative interference from previously decoded sub-messages is upper bounded by a constant $\epsilon$, we have

\begin{equation}
|g_a|^2 p_a + |g_b|^2 p_b + \xi + N_0 \leq \epsilon,
\end{equation}
where $\xi$ represents the interference from sub-messages decoded after both $w_a$ and $w_b$. Two decoding scenarios are possible: a) $w_a$ is decoded before $w_b$, and b) $w_b$ is decoded before $w_a$. We analyze the achievable signal strength under both cases. Case a): when $w_a$ decoded first:  
In Case a), where $w_a$ is decoded first, the feasible power allocation for $w_a$ and $w_b$ must satisfy
\begin{equation}
\frac{|g_a|^2 p_a}{|g_b|^2 p_b + \xi + N_0} \geq \rho_a^{\min}, \quad
\frac{|g_b|^2 p_b}{\xi + N_0} \geq \rho_b^{\min},
\end{equation}
where $0 \leq p_a, p_b \leq P_{\max}$. Based on this feasible region, the corresponding lower bound of the achievable signal strength can be expressed as
\begin{equation}
p_a \geq \frac{\rho_a^{\min} (\rho_b^{\min} + 1)(\xi + N_0)}{|g_a|^2}, \quad
p_b \geq \frac{\rho_b^{\min} (\xi + N_0)}{|g_b|^2},
\end{equation}
and the associated upper limit for power control is given by
\begin{equation}
p_a \leq P_{\max}, \quad
p_b \leq \min \left\{P_{\max}, \frac{\frac{|g_a|^2 P_{\max}}{\rho_a^{\min}} - \xi - N_0}{|g_b|^2} \right\},
\end{equation}
The maximum achievable signal strength from sub-messages $w_a$ and $w_b$, denoted by $\chi \triangleq |g_a|^2 p_a + |g_b|^2 p_b$, can thus be written as
\begin{equation}\label{C5}
\begin{aligned}
\chi \leq \chi_a \triangleq \min \Big\{ 
&|g_a|^2 P_{\max} \left(1 + \frac{1}{\rho_b^{\min}}\right), \\
&\left(|g_a|^2 + |g_b|^2\right) P_{\max}, \epsilon - \xi - N_0 
\Big\}.
\end{aligned}
\end{equation}

Case b): when $w_b$ is decoded first, the maximum combined signal strength from $w_a$ and $w_b$ can similarly be expressed as
\begin{equation}\label{C6}
\begin{aligned}
\chi \leq \chi_b \triangleq \min \Big\{ 
&|g_b|^2 P_{\max} \left(1 + \frac{1}{\rho_b^{\min}} \right), \\
&\left( |g_a|^2 + |g_b|^2 \right) P_{\max}, \epsilon - \xi - N_0 \Big\}.
\end{aligned}
\end{equation}
By comparing expressions (\ref{C5}) and (\ref{C6}), we observe that in Case a), if $|g_a|^2 \left(1 + \frac{1}{\rho_a^{\min}} \right) \geq |g_b|^2 \left(1 + \frac{1}{\rho_b^{\min}} \right)$, then $\chi_a \geq \chi_b$, i.e., decoding of $w_a$, indicating that decoding $w_a$ first yields higher signal strength. Conversely, in Case b), if $|g_a|^2 \left(1 + \frac{1}{\rho_a^{\min}} \right) \leq |g_b|^2 \left(1 + \frac{1}{\rho_b^{\min}} \right)$, then $\chi_a \leq \chi_b$, and decoding $w_b$ first becomes the better choice.

This confirms that the optimal decoding strategy is to prioritize decoding the sub-message $w_{k,n,i}$ corresponding to the larger value of 
$|g_{k,n}|^2 \left(1 + \frac{1}{\rho_{k,n,i}^{\min}} \right)$. Accordingly, a near-optimal decoding order is obtained by sorting all sub-messages in descending order of $\upsilon_{k,n,i} \triangleq |g_{k,n}|^2 \left(1 + \frac{1}{\rho_{k,n,i}^{\min}} \right) = |g_{k,n}|^2 \left(1 + \frac{1}{2^{\left( \mu_{k,n,i}R_{\min} \right)}-1} \right)$. This concludes the proof.
\end{proof}

\subsection{MDP Construction}

With the decoding order strategy $\boldsymbol{\pi}$ determined, problem $\mathcal{P}0$ is reformulated as
\begin{subequations}\label{subproblem_2}
	\begin{align}
  \mathcal{P}2: 
  &\mathop{\mathrm{max}}\limits_{\mathbf{A}, \mathbf{P}, \mathbf{D}, \mathbf{h}, \mathbf{t}} \ 
		 \eta
		\\
		&\quad \mathrm{s.t.} \ 
		\mathrm{C1} \sim \mathrm{C8}.
        \end{align}
\end{subequations}
To obtain the optimal joint trajectory, task offloading, and power allocation, we propose generative AI-enhanced DRL method.

DRL operates on two core components: the agent and the environment. Their interaction is typically modeled as a MDP, characterized by $\langle\mathcal{S},\mathcal{A},\mathcal{R},\omega,\mathcal{P}\rangle$, where $\mathcal{S}$ denotes the state space, and $\mathcal{A}$ the action space available to the agent. The reward function $\mathcal{R}$ quantifies the immediate feedback received after taking a specific action in a particular state, whereas the discount factor $\omega$ controls the trade-off between short-term and long-term rewards. The transition probability function $\mathcal{P}$ governs the dynamics of state evolution following each action. The key components $\mathcal{S}$, $\mathcal{A}$, and $\mathcal{R}$ are detailed as follows:

\subsubsection{State Spaces}At time slot $n$, the state of GT $k$ is defined by its fixed location, i.e., $s_k(n) = u_{kn} \triangleq u_k, \forall n \in \mathcal{N}$. The state of UAV $s_u(n)$ can be represented by its 3D position and time allocation, i.e., $s_u(n)=(D_n^u,h_n^u, t_n^u)$. Hence, the overall system state at slot $n$ is obtained as $s(n)=\{\{s_{k}(n)\}_{k\in\mathcal{K}},s_{u}(n), n\}\in\mathcal{S}$.

\subsubsection{Action Spaces}Following the mobility model in~\cite{9013924}, the UAV is allowed to move only to one of the adjacent horizontal cells within a single time slot. As a result, its horizontal position $D_u(n)$ is updated in the next slot as:
\begin{equation}
\begin{aligned}
D_u(n+1)=\begin{cases}D_u(n)+(0,y_s),&\mathrm{~if~}d(n)=N\\D_u(n)-(0,y_s),&\mathrm{~if~}d(n)=S\\D_u(n)+(x_s,0),&\mathrm{~if~}d(n)=E\\D_u(n)-(x_s,0),&\mathrm{~if~}d(n)=W\\D_u(n)+(0,0),&\mathrm{~if~}d(n)=I
\end{cases}
\end{aligned}
\end{equation}
where $d(n)$ denotes the UAV's selected horizontal flight action at time slot $n$, with $E$, $S$, $W$, and $N$, indicating movement in the directions of east, south, west, and north, respectively. The symbol $I$ signifies the UAV remains stationary. In the vertical dimension, the UAV is similarly constrained to move only to an adjacent altitude level in each time slot. Accordingly, the UAV's altitude $h_n^u$ will be updated in the next slot as:
\begin{equation}
\begin{aligned}
h_{n+1}^u=\begin{cases}h_n^u+1,&\mathrm{if~}h(n)=U\\h_n^u-1,&\mathrm{if~}h(n)=D\\h_n^u+0,&\mathrm{if~}h(n)=I
\end{cases}
\end{aligned}
\end{equation}
where $h(n)$ denotes the UAV's chosen altitude action at time slot $n$; $U$ and $D$ represent ascending and descending, respectively. If the selected horizontal action $d(n)$ leads the UAV outside the area of interest at time slot $n+1$, it is reset to $I$ to keep the UAV stationary. Similarly, if $h(n)$ causes the UAV's altitude to exceed the allowable range $[h_{\min}, h_{\max}]$, it is set to $I$ to retain the current altitude level. Accordingly, the system action at time slot $n$ can be expressed as $a(n)=(d(n),h(n), t_n^{u}, \{a_{k,n}\}_{k\in\mathcal{K}}, \{p_{k,n,i}\}_{k\in\mathcal{K}}, i \in \{1, 2\})\in \mathcal{A}$, where the action space $\mathcal{A}$ comprises the UAV's movement and time allocation actions, as well as the GTs' task offloading and power control decisions.

\subsubsection{Rewards} The reward associated with a state-action pair $(s(n), a(n))$ at time slot $n$ is given by:
\begin{equation}
\begin{aligned}
r(s(n),a(n))=\lambda_1 \sum_{k\in\mathcal{K}}\frac{\varpi_{k, n+1}} {E_{n+1}^{\mathrm{uav}}} - \lambda_2 PV.
\end{aligned}
\end{equation}
After performing action $a(n)$, the reward $r(s(n), a(n))$ is defined as the energy efficiency, measured by the ratio between the amount of data processed by each GT in the next time slot, i.e., $\varpi_{k,n+1}$, and the UAV’s propulsion energy consumption $E_{n+1}^{\mathrm{uav}}$. To influence the convergence behavior, two scaling factors $\lambda_1$ and $\lambda_2$ are introduced. Additionally, $PV$ denotes a penalty function, which is defined as
\begin{equation}
\begin{aligned}
\left.PV=\left\{\begin{array}{ll}c_0,&\mathrm{if \ C1 \sim C8 \ are \ not \ satisfied},\\0,&\mathrm{Otherwise},\end{array}\right.\right.
\end{aligned}
\end{equation}
where $c_0$ is a positive constant that controls the penalty magnitude. Through this design, the DRL framework aims to learn energy-efficient solutions for RSMA-enabled low-altitude MEC systems.

\subsection{The Proposed Generative AI-enhanced DRL Method}

Jointly optimizing UAV trajectory, task offloading, and power allocation presents significant challenges for conventional DRL algorithms, primarily due to the high-dimensional and discrete nature of the action space. Standard DRL agents often exhibit poor sample efficiency and limited exploration capability under such structured constraints. To address these issues, we adopt a diffusion model-based DRL framework that utilizes generative modeling via a learned denoising process to generate structured actions. This method promotes more efficient exploration and supports the synthesis of valid, high-quality policies that better satisfy system constraints and optimize long-term energy efficiency.

The Denoising Diffusion Probabilistic Model (DDPM) progressively perturbs data into Gaussian noise through a forward diffusion process and then learns to recover the original data via a reverse denoising process~\cite{nichol2021improved}. Leveraging this generative mechanism and its ability to incorporate conditional information, we design a diffusion-based optimizer for enhancing solution quality and synergizing effectively with DRL to address dynamic and complex optimization in RSMA-enabled low altitude MEC systems.

In the diffusion framework, an optimal decision under the current environment is progressively perturbed with noise until it becomes Gaussian, a process referred to as forward probability noising~\cite{10529221}. In the reverse phase, the decision generation network $\pi_{\theta}(\cdot)$ serves as a denoiser that reconstructs the original solution $\boldsymbol{z}_0$ from Gaussian noise, conditioned on the system state $s$. The following presents the formal mathematical formulation of this diffusion-based decision process.

\subsubsection{Forward Process} The decision output $\boldsymbol{z}_0=\pi_{\vartheta}(s)$ represents the probability distribution over discrete decisions under the observed environment state $s$.  In the forward diffusion process, this initial distribution is gradually perturbed by Gaussian noise, resulting in a sequence of intermediate representations $\boldsymbol{z}_1,\boldsymbol{z}_2,\ldots,\boldsymbol{z}_T$, each sharing the same dimensionality as $\boldsymbol{z}_0$. At each step $t$, the transition from $\boldsymbol{z}_{t-1}$ to $\boldsymbol{z}_t$ follows a Gaussian distribution with mean $\sqrt{1-\varphi_{t}}\boldsymbol{z}_{t-1}$ and variance $\varphi_t\mathbf{I}$, as described in~\cite{ho2020denoising}.
\begin{equation}
q\left(\boldsymbol{z}_{t}|\boldsymbol{z}_{t-1}\right)=\mathcal{N}\left(\boldsymbol{z}_{t};\sqrt{1-\varphi_{t}}\boldsymbol{z}_{t-1},\varphi_{t}\mathbf{I}\right),
\end{equation}
where $t=1,\ldots,T$, $\varphi_{t}=1-e^{-\frac{\varphi_{\min}}{T}-\frac{2t-1}{2T^{2}}(\varphi_{\max}-\varphi_{\min})}$ denotes the time-dependent variance in the forward process~\cite{ho2020denoising}.

Since each $\boldsymbol{z}_t$ depends only on its immediate predecessor $\boldsymbol{z}_{t-1}$, the forward process constitutes a Markov chain. Consequently, the distribution of $\boldsymbol{z}_T$ conditioned on $\boldsymbol{z}_0$ can be expressed as the product of transition probabilities across denoising steps~\cite{ho2020denoising}, which is shown as 
\begin{equation}
q(\boldsymbol{z}_T|\boldsymbol{z}_0) = \prod_{t=1}^T q(\boldsymbol{z}_t|\boldsymbol{z}_{t-1}).
\end{equation}

Although the forward process is not explicitly executed, it defines a closed-form relationship between $\boldsymbol{z}_0$ and any intermediate state $\boldsymbol{z}_t$, given by
\begin{equation}\label{forward_process}
\boldsymbol{z}_t=\sqrt{\bar{\nu}_t}\boldsymbol{z}_0+\sqrt{1-\bar{\nu}_t}\boldsymbol{\epsilon},
\end{equation}
where $\nu_{t}=1-\varphi_{t}$, $\bar{\nu}_{t}=\prod_{k=1}^{t}\nu_{k}$ denotes the cumulative product up to step $t$. The forward process relates $\boldsymbol{z}_t$ and $\boldsymbol{z}_0$ as a noisy transformation, where $\varepsilon\sim\mathcal{N}(\mathbf{0},\mathbf{I})$ is standard Gaussian noise. As $t$ increases, $\boldsymbol{z}_T$ converges to pure noise distributed as $\mathcal{N}(\mathbf{0}, \mathbf{I})$. However, since wireless network optimization problems typically lack ground-truth datasets of optimal decisions $\boldsymbol{z}_0$, the forward process is not executed in the proposed framework.

\subsubsection{Reverse Process} The goal of the reverse process is to reconstruct the desired target $\boldsymbol{z}_0$ starting from a noise sample $\boldsymbol{z}_T \sim \mathcal{N}(\mathbf{0}, \mathbf{I})$ by iteratively denoising it. Within our framework, this procedure corresponds to synthesizing the optimal decision policy from an initially Gaussian-distributed sample. The transition between $\boldsymbol{z}_t$ and $\boldsymbol{z}_{t-1}$ is modeled as $p(\boldsymbol{z}_{t-1}|\boldsymbol{z}_{t})$, which is intractable in closed form but can be approximated by a Gaussian distribution, expressed as
\begin{equation}\label{reverse_process_Gaussian}
p_{\vartheta}\left(\boldsymbol{z}_{t-1}|\boldsymbol{z}_{t}\right)=\mathcal{N}\left(\boldsymbol{z}_{t-1};\boldsymbol{\mu}_{\vartheta}\left(\boldsymbol{z}_{t},t,s\right),\tilde{\varphi}_{t}\mathbf{I}\right),
\end{equation}
where the mean $\boldsymbol{\mu}_{\vartheta}$ is learned via a deep neural network, and the variance is obtained as~\cite{ho2020denoising}.
\begin{equation}\label{variance_of_reverse_process}
\tilde{\varphi}_t=\frac{1-\bar{\nu}_{t-1}}{1-\bar{\nu}_{t}}\varphi_{t}
\end{equation}.

By applying Bayes' theorem, the reverse process can be reformulated in terms of the forward process and expressed as a Gaussian probability density function. Accordingly, the mean is derived as
\begin{equation}\label{transition_Gaussian}
\boldsymbol{\mu}_{\vartheta}\left(\boldsymbol{z}_t,t,s\right)=\frac{\sqrt{\nu_t}\left(1-\bar{\nu}_{t-1}\right)}{1-\bar{\nu}_t}\boldsymbol{z}_t+\frac{\sqrt{\bar{\nu}_{t-1}}\varphi_t}{1-\bar{\nu}_t}\boldsymbol{z}_0,
\end{equation}
where $t=1,\ldots,T$. Based on the forward process in~(\ref{forward_process}), the reconstructed sample $\boldsymbol{z}_0$ can be directly estimated by
\begin{equation}\label{reconstructed_sample}
\boldsymbol{z}_0=\frac{1}{\sqrt{\bar{\nu}_t}}\boldsymbol{z}_t-\sqrt{\frac{1}{\bar{\nu}_t}-1}\cdot\tanh\left(\boldsymbol{\sigma}_{\vartheta}(\boldsymbol{z}_t,t,s)\right),
\end{equation}
where $\boldsymbol{\sigma}_{\vartheta}(\boldsymbol{z}_t,t,s)$ denotes a deep neural network parameterized by $\vartheta$ that predicts the denoising noise conditioned on the observed state $s$. To prevent excessive noise in the reconstructed decision $\boldsymbol{z}_0$, which may obscure the true action probabilities, the output is scaled using a hyperbolic tangent activation to ensure bounded noise levels.

During the reverse denoising process, each step $t$ introduces a new noise component $\boldsymbol{\sigma}_{\vartheta}$, which remains independent of the forward-process noise $\boldsymbol{\sigma}$. As a result, $\boldsymbol{z}_0$ cannot be directly recovered using~(\ref{reconstructed_sample}). Instead, we substitute~(\ref{reconstructed_sample}) into the reverse transition expression in~(\ref{transition_Gaussian}) to estimate the mean, shown as
\begin{equation}\label{mean_of_reverse_process}
\boldsymbol{\mu}_{\vartheta}\left(\boldsymbol{z}_{t},t,s\right)=\frac{1}{\sqrt{\nu_{t}}}\left(\boldsymbol{z}_{t}-\frac{\varphi_{t}\tanh\left(\boldsymbol{\sigma}_{\vartheta}(\boldsymbol{z}_{t},t,s)\right)}{\sqrt{1-\bar{\nu}_{t}}}\right).
\end{equation}

We then sample $\boldsymbol{z}_{t-1}$ from the reverse transition distribution $p(\boldsymbol{z}_t)p_{\vartheta}(\boldsymbol{z}_{t-1}|\boldsymbol{z}_t)$ and iterate this process over $t = T, T-1, \ldots, 1$. By recursively applying these steps, the final generation distribution $p_{\vartheta}(\boldsymbol{z}_0)$ is given by
\begin{equation}
p_{\theta}\left(\boldsymbol{z}_0\right)=p\left(\boldsymbol{z}_T\right)\prod_{t=1}^Tp_{\theta}\left(\boldsymbol{z}_{t-1}|\boldsymbol{z}_t\right),
\end{equation}
where $p\left(\boldsymbol{z}_T\right)$ denotes a standard Gaussian distribution. After the reverse process yields the generative distribution $p_{\vartheta}(\boldsymbol{z}_0)$, a sample of the final output $\boldsymbol{z}_0$ can be drawn accordingly.

A common challenge in training generative models lies in the inability to backpropagate gradients through stochastic sampling operations. To overcome this, we adopt a reparameterization technique that separates the source of randomness from the distribution parameters. Specifically, the sampling is reformulated using the following update rule:
\begin{equation}\label{update_rule}
\boldsymbol{z}_{t-1}=\boldsymbol{\mu}_{\vartheta}\left(\boldsymbol{z}_t,t,s\right)+\left(\tilde{\varphi}_t/2\right)^2\odot\boldsymbol{\sigma},
\end{equation}
where $\sigma\sim\mathcal{N}(0,\mathbf{I})$. By recursively adopting the reverse update rule in~(\ref{update_rule}), all intermediate states $\boldsymbol{z}_{t} \ (0 \le t \le T)$, including the final output $\boldsymbol{z}_0$, can be generated from an initial Gaussian noise sample.

Finally, the softmax function is applied to $\boldsymbol{z}_0$ to convert it into a valid probability distribution, given by
\begin{equation}\label{probability_distribution_softmax}
\pi_{\vartheta}(s)=\left\{\frac{e^{\boldsymbol{z}_0^j}}{\sum_{i=1}^{\mathcal{A}}e^{\boldsymbol{z}_0^i}},\forall j\in\mathcal{A}\right\}
\end{equation}
Each element in $\pi_{\vartheta}(s)$ represents the probability of selecting a specific action under state $s$.


To implement the proposed approach in practice, the first step involves calculating the mean $\boldsymbol{\mu}_{\vartheta}$ of the reverse transition distribution $p_{\vartheta}(\boldsymbol{z}_{t-1}|\boldsymbol{z}_t)$, as defined in (\ref{reverse_process_Gaussian}) and (\ref{mean_of_reverse_process}), and then update $\boldsymbol{z}_{t-1}$ using the rule in~(\ref{update_rule}). Subsequently, the optimal decision distribution $\boldsymbol{z}_{0}$ is obtained via~(\ref{probability_distribution_softmax}). To further enhance optimization, we integrate the diffusion model into the Soft Actor-Critic (SAC) framework. SAC is an off-policy DRL algorithm that augments the reward with an entropy term to jointly maximize expected returns and policy entropy, effectively balancing exploration and exploitation.



\subsubsection{Experience Replay} The agent continuously interacts with the environment to collect trajectory data. At slot $n$, the actor observes the state $s(n)$ and generates a discrete action distribution $\pi_{\vartheta}(s(n))$. An action $a(n) \sim \pi_{\vartheta}(s(n))$ is sampled and executed, after which the environment returns a reward $r(n) = (s(n), a(n))$ and transitions to the next state $s(n+1)$. The resulting transition tuple $(s(n),a(n),r(n),s(n+1)$ is stored in the experience replay buffer $\mathcal{B}$. This replay mechanism enables real-time interaction while allowing asynchronous sampling of past experiences, thereby enhancing training efficiency.

\subsubsection{Double Critic Networks} In the proposed framework, the critic network is implemented as a soft Q-function $Q_\psi(s(n), a(n))$, which evaluates the expected return augmented by the entropy of the policy $\psi$ under $(s(n), a(n))$. The formulation is given by
\begin{equation}
Q_{\psi}(s(n),a(n))=E_{s(n+1)\sim\mathcal{B}}[r(s(n),a(n))+\omega V_{\pi}(s(n+1)))],
\end{equation}
where $\omega$ denotes the reward discount factor, and $s(n+1)$ is the subsequent state sampled from the replay buffer $\mathcal{B}$. The corresponding soft value function $V_{\pi}(s(n))$ is expressed as:
\begin{equation}
\begin{aligned}&V_{\pi}(s(n))=E_{a(n)\sim\pi}[Q_{\psi}(s(n+1),a(n+1))+\gamma J(\pi_{\vartheta}(s(n)))],\\&\mathrm{s.t.}J(\pi_{\vartheta}(s(n)))=-\pi_{\vartheta}(s(n))\log\pi_{\vartheta}(s(n)),\end{aligned}
\end{equation}
the term $J(\pi_{\vartheta}(s(n)))$ represents the entropy of the policy at state $s(n)$, while $\gamma \in [0,1]$ is a tunable temperature coefficient that regulates the influence of the entropy in the objective. To reduce the overestimation bias often observed in Q-learning algorithms, GDRS incorporates a pair of critic networks, $Q_{{\psi}_1}(s(n), a(n))$ and $Q_{{\psi}_2}(s(n), a(n))$. The actor is trained using the smaller of the two Q-values, ensuring a conservative and stable policy update.

\subsubsection{Target Networks} Each critic network is associated with a corresponding target network, parameterized by $\hat{\psi}_{1}$ and $\hat{\psi}_{2}$, respectively. These target networks mirror the architecture of their online counterparts and are introduced to enhance training stability by reducing fluctuations in target estimates. To ensure smooth updates, their parameters are adjusted incrementally using a soft update strategy, defined as:
\begin{equation}\label{update_rule_of_target}
\hat{\psi_i}\leftarrow\varsigma \psi_i+(1-\varsigma )\hat{\psi_i},
\end{equation}
where $\varsigma  \in (0,1]$ denotes the soft update rate, and $i=1,2$.

To train the critic networks, the agent draws a mini-batch of transition samples from the replay buffer $\mathcal{B}$ and optimizes the network parameters by minimizing the following loss function:
\begin{equation}\label{update_rule_of_critic}
L_{Q}(\psi_{i})=E_{(s(n),a(n))\sim\mathcal{B}}[\frac{1}{2}(Q_{\psi_{i}}(s(n),a(n))-\hat{q})^{2}],
\end{equation}
where we define the Q-target $\hat{y}$ as
\begin{equation}
\hat{q}=r(s(n),a(n))+\omega V_{\hat{\psi}}(s(n+1)).
\end{equation}
Subsequently, the parameters of the two critic networks are updated using gradient descent, following the update rule:
\begin{equation}\label{update_rule_of_double_critic}
\psi_i\leftarrow\psi_i-\tau_c\nabla_{\psi_i}L_Q(\psi_i),
\end{equation}
where $\tau_c$ represents the learning rate of the critic network for updating critic network parameters.

For training the policy network, the actor aims to maximize the expected Q-value to enhance decision-making quality. The objective function for updating the actor is defined as:
\begin{equation}\label{loss_function_of_actor}
L_{\pi}(\vartheta)=E_{s(n)\sim\mathcal{B}}[\gamma J(\pi_{\vartheta}(s(n)))-\min_{i=1,2}Q_{\psi_{i}}(s(n),a(n))],
\end{equation}
and we update the parameter $\vartheta$, shown as
\begin{equation}\label{update_rule_of_actor}
\vartheta\leftarrow\vartheta-\tau_a\nabla_\vartheta L_\pi(\vartheta),
\end{equation}
where $\tau_a$ indicates the learning rate of actor network.

\begin{figure}[!t]
\centering
\includegraphics[width=0.41\paperwidth]{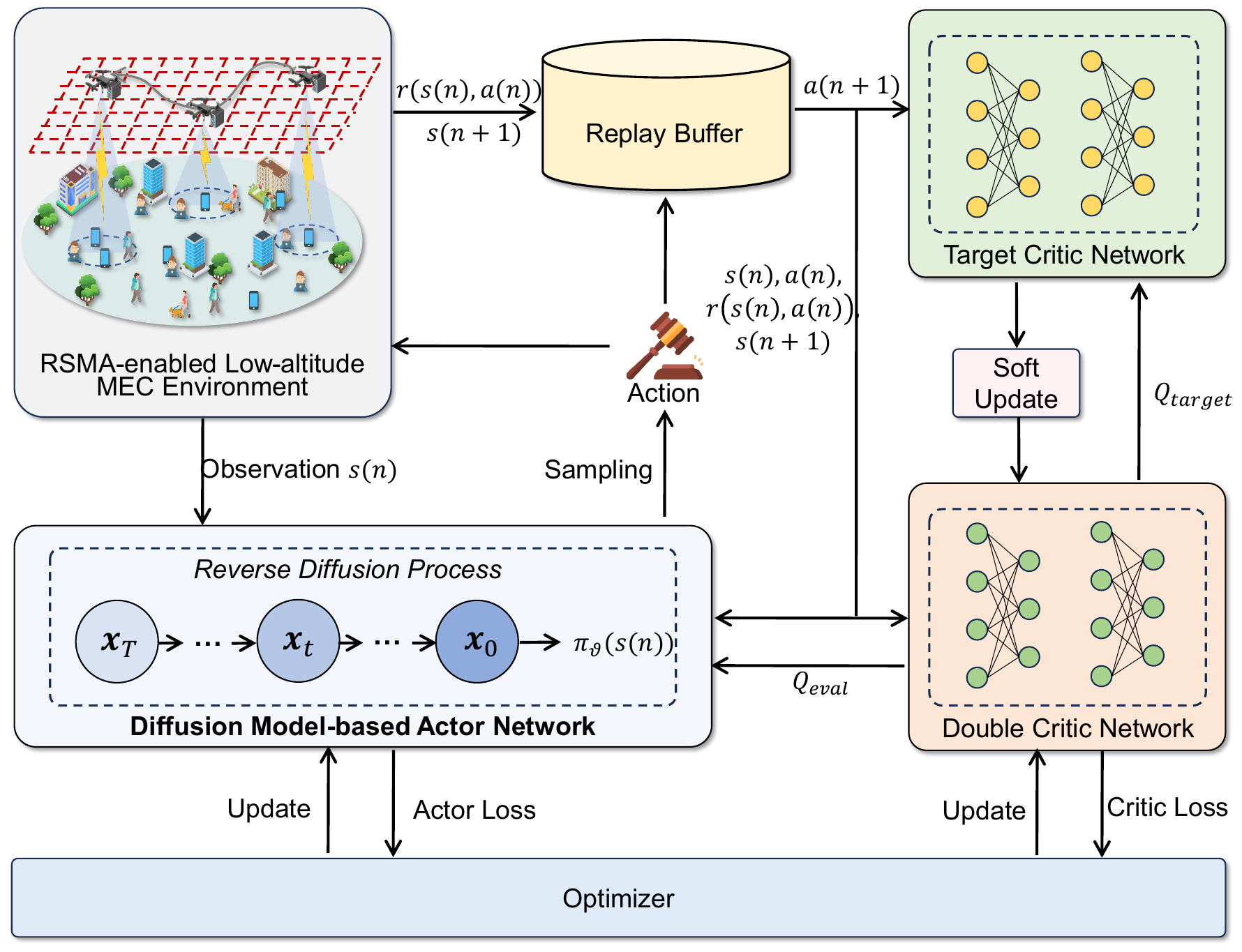}
\caption{The proposed GDRS optimization framework for joint optimization of UAV trajectory, task offloading and power allocation in RSMA-enabled low-altitude MEC systems.}
\label{proposed_framework}
\end{figure} 

\subsection{Algorithm Framework}

Fig.~\ref{proposed_framework} shows the framework of the proposed GDRS method, which extends the SAC algorithm. It comprises several core components for policy training, including an diffusion model-based actor network, double critic networks with corresponding target networks, an experience replay buffer, and the interaction environment~\cite{10409284}. In addition, the detailed progress of the proposed GDRS method is presented in Algorithm~\ref{training_algorithm}.

\begin{algorithm}[!t]
    \caption{GDRS: Generative Diffusion-Enhanced DRL for RSMA-enabled Low-Altitude MEC}
    \label{training_algorithm}
    \SetKwInput{KwInput}{Input}
    \SetKwInOut{KwOutput}{Output}

    \KwInput{Initialize replay buffer $\mathcal{B}$; actor parameters $\vartheta$; critic parameters $\phi_1, \phi_2$; target parameters $\hat{\psi}_1 \leftarrow \psi_1$, $\hat{\psi}_2 \leftarrow \psi_2$; training step index $e$; diffusion step index $t$; trajectory counter $q$.}

    \While{$e < E$}{
        \While{$q < Q$}{
            Determine the optimal decoding order $\boldsymbol{\pi}$ according to Proposition~\ref{proposition_decoding_order}; \\
            Observe current state $s(n)$; sample initial noise $\boldsymbol{z}_T \sim \mathcal{N}(\mathbf{0}, \mathbf{I})$; \\

            \While{$t < T$}{
                Apply denoising model $\boldsymbol{\sigma}_\vartheta(\boldsymbol{z}_t, t, s(n))$; \\
                Compute mean $\boldsymbol{\mu}_\vartheta$ and variance $\tilde{\varphi}_t$ using (\ref{mean_of_reverse_process}) and (\ref{variance_of_reverse_process}); \\
                Update $\boldsymbol{z}_t$ using (\ref{update_rule}); \\
                $t \gets t + 1$;
            }

            Sample discrete action $a(n)$ to determine UAV trajectory, offloading, and power allocation; \\
            Execute $a(n)$ in the environment; record the resulting reward $r(n)$ and the subsequent state $s'(n)$; \\
            Store the transition tuple $(s(n), a(n), r(n), s'(n))$ in $\mathcal{B}$; \\

            Randomly select a mini-batch of samples from $\mathcal{B}$; \\
            Update critic parameters $\psi_1$, $\psi_2$ using (\ref{update_rule_of_critic}), (\ref{update_rule_of_double_critic}); \\
            Update actor parameters $\vartheta$ using (\ref{loss_function_of_actor}) and (\ref{update_rule_of_actor}); \\
            Perform soft updates on $\hat{\psi}_1$, $\hat{\psi}_2$ via (\ref{update_rule_of_target}); \\

            $q \gets q + 1$;
        }
        $e \gets e + 1$;
    }
    \KwOutput{Optimized actor policy $\vartheta^*$.}
\end{algorithm}

\section{Simulation Results and Discussion}\label{sec_simulation}

This section presents simulation studies designed to evaluate the effectiveness of the proposed GDRS approach. We begin by outlining the experimental configuration and baseline settings. Then, the simulation outcomes are discussed and examined thoroughly.

\subsection{Experimental Setup} 
\subsubsection{Experimental Configuration} We consider a task area measuring $1000 \times 1000~\mathrm{m}$, which includes one UAV equipped with an MEC server and certain number of GTs. We set the initial location of the UAV as $\left[0,0,200\right]$. The total time slots of the system is set to $100$, and each time slot is of length $1~\mathrm{s}$. According to the settings about propulsion model of the rotary-wing UAV~\cite{8883173}, we set $U_{\mathrm{t}} = 120~\mathrm{m/s}$, $\bar{v} = 4.03~\mathrm{m/s}$, $F_0 = 0.6$, $g=0.05$, $\rho=1.225$, $M=0.503$. Each GT divides its transmitted signal into two distinct sub-messages, i.e., $I = 2$. The rate-splitting ratio of the sub-messages is set to $\mu_{k,n,i}=0.5, \forall k,n,i$. As for the training model, we set the learning rates of actor network and critic networks as $\tau_c = \tau_a = 5\times10^{-4}$, and set the update rate $\varsigma=5\times10^{-3}$ for soft updating the target networks.
All experiments were conducted in a Python 3.10 environment with PyTorch 2.1.0 on a server running Ubuntu, equipped with a 2.10 GHz Intel Xeon Gold 5218R processor (40 cores, 503 GB RAM) and an NVIDIA A100 GPU featuring 80 GB of memory. More parameters about the considered scenario and proposed method are shown in Table~\ref{table_parameters}.

\begin{table}[!t]\footnotesize
	\renewcommand{\arraystretch}{1.3}
	\caption{System Parameters}
	\label{table_parameters}
	\centering
	\begin{tabular}{p{0.6\linewidth}|p{0.3\linewidth}}
		\hline\hline
		\textbf{Parameters}
		& \textbf{Values}
		\\ \hline
            Carrier frequency $f_c$
		& $2.4~\mathrm{GHz}$
            \\ \hline
            Environment constants $\alpha$, $\beta$
		& $12.08$, $0.11$~\cite{6863654}
            \\ \hline
            LoS and NLoS pathloss factors $\zeta_{\mathrm{LoS}}$, $\zeta_{\mathrm{NLoS}}$ & $1.6$, $23$~\cite{10778659}
            \\ \hline
            Communication bandwidth $B$
		& $1~\mathrm{MHz}$
		\\ \hline
            Maximum transmit power of each GT $P_{\mathrm{max}}$ & $5~\mathrm{mW}$
            \\ \hline
            Speed of light $c$
		& $3 \times 10^{8}~\mathrm{m/s}$
            \\ \hline
            Noise power $N_0$
		& $-174~\mathrm{dBm/Hz}$
            \\ \hline
            UAV flight time $t_{\mathrm{min}}$, $t_{\mathrm{max}}$
		& $1~\mathrm{s}$, $5~\mathrm{s}$
            \\ \hline
            UAV flight height $h_{\mathrm{min}}$, $h_{\mathrm{max}}$
		& $100~\mathrm{m}$, $200~\mathrm{m}$
            \\ \hline
            Maximum flight speed $V_{\mathrm{max}}^{h}$, $V_{\mathrm{max}}^{v}$
		& $10~\mathrm{m/s}$, $10~\mathrm{m/s}$
            \\ \hline
            Blade profile power $W_0$
		& $79.9~\mathrm{W}$~\cite{8883173}
            \\ \hline
            Induced power $W_1$
		& $88.6~\mathrm{W}$~\cite{8883173}
            \\ \hline
            Descending/ascending power $W_2$
		& $11.46~\mathrm{W}$~\cite{8883173}
            \\ \hline
            Processed rate of GT $f_g$
		& $5~\mathrm{cycles/s}$
            \\ \hline
            Processed rate of UAV $f_u$
		& $100\sim500~\mathrm{cycles/s}$
            \\ \hline
            The CPU cycles to be computed $C_k$ & $500 \sim 2500~\mathrm{bits}$
            \\ \hline
            The amount of data to be processed $D_k$ & $1000 \sim 1500~\mathrm{cycles}$
            \\ \hline
            Number of GTs $K$ & $2 \sim 5$
            \\ \hline
            Total time slots $N$ & $100$
		\\ \hline
            Minimal rate requirement $R_{\mathrm{min}}$ & $0.2~\mathrm{bit/s/Hz}$
            \\ \hline
		  The number of denoising steps $T$
		& $10$, $20$, $30$
            \\ \hline
		  Learning rate $\tau_{c}$, $\tau_{a}$
		& $5\times10^{-4}$
            \\ \hline
		  Weight of soft update $\varsigma$
		& $0.005$
            \\ \hline
		  Discount factor $\omega$
		& $0.95$
            \\ \hline
		  Number of training episodes $E$ & $500$
            \\ \hline
		  Batch size & $64$
		\\ \hline\hline
	\end{tabular}
\end{table}

\subsubsection{Baseline Settings} To present fair comparison, we compare our proposed GDRS method with the following well-known DRL benchmark:
\begin{itemize}
\item PPORS: Proximal Policy Optimization for RSMA-enabled low-altitude MEC, where the agent learns an optimal policy in a discrete action space to maximize the energy efficiency.
\item DQNRS: Deep Q-Network for RSMA-enabled low-altitude MEC, where the agent approximates the optimal action-value function in a discrete action space to maximize the energy efficiency.
\item GDRS-Random: Generative diffusion-enhanced optimization for RSMA-enabled low-altitude MEC, where UAV adopts the random decoding order for uplink GTs instead of optimal policy proposed in \textbf{Proposition~\ref{proposition_decoding_order}}.
\end{itemize}

Additionally, in order to validate the effectiveness of the RSMA framework in supporting low-altitude MEC systems, we also design generative diffusion-enhanced algorithms under different multiple access schemes as baseline approaches:
\begin{itemize}
\item GDFD: Generative diffusion-enhanced optimization for Frequency Division Multiple Access (FDMA)-enabled low-altitude MEC, where the uplink bandwidth is equally divided among GTs, ensuring orthogonal transmissions without inter-user interference.
\item GDNO: Generative diffusion-enhanced optimization for NOMA-enabled low-altitude MEC, where multiple GTs simultaneously share the same uplink resources via power-domain superposition coding and successive interference cancellation at the UAV.
\end{itemize}

\subsection{Experimental Results}

Fig.~\ref{fig:convergence} plots the average reward obtained by GDRS and the baseline algorithms over training episodes. First, GDRS converges substantially faster and attains a higher steady-state reward than all baselines. Specifically, the reward curve of GDRS rises sharply and stabilizes at a high value after about 110 training episodes. Furthermore, the proposed GDRS method outperforms PPORS and DQNRS due to its enhanced exploration capability in high-dimensional hybrid action spaces, achieved by introducing a diffusion model to generate structured and high-quality action candidates. In addition, compared to the method with random decoding order, GDRS exhibits less fluctuation in its training curve and achieves a higher average reward, demonstrating the effectiveness of the optimized decoding order.

\begin{figure}[!t]
\centering
\includegraphics[width=0.95\linewidth]{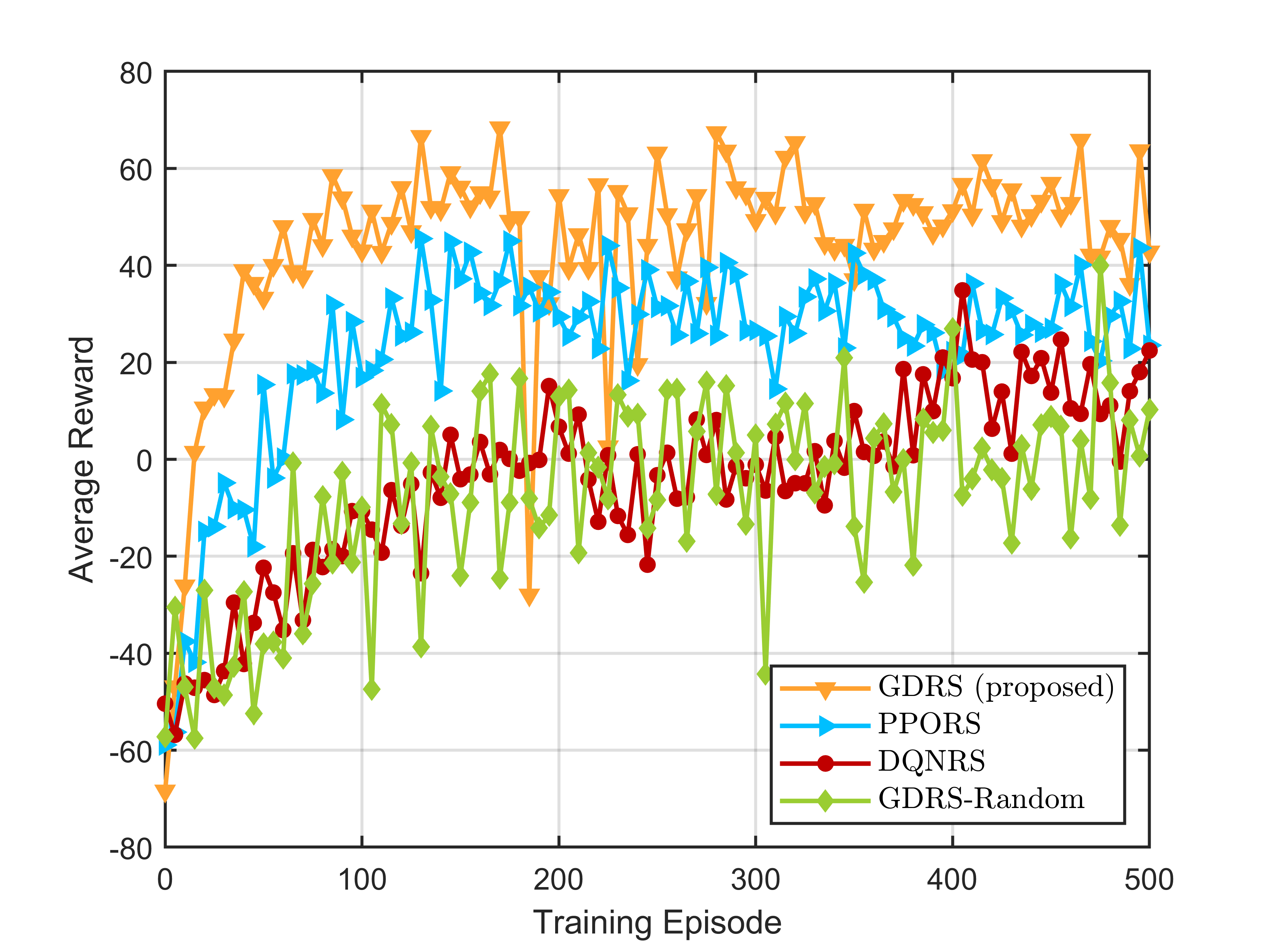}
\caption{The reward curves of the proposed GDRS algorithm and other baselines for RSMA-enabled LAE MEC systems over training episodes.}
\label{fig:convergence}
\end{figure}

Fig.~\ref{convergence_different_learning_rate} examines the training reward curves of the proposed GDRS algorithm under different learning rates. When the learning rate is set to $5\times 10^{-4}$, the training curve achieves higher reward values compared to those with learning rates of $1\times 10^{-4}$ and $1\times 10^{-3}$. This is because a large learning rate causes excessively large update steps, resulting in unstable training for the agent. Conversely, a small learning rate leads to a slow learning process and causes the training to converge prematurely or get trapped in a local optimum. A learning rate of $5\times10^{-4}$ strikes a better balance between learning stability and update efficiency. Therefore, all subsequent experiments adopt a learning rate of $5\times 10^{-4}$.

\begin{figure}[!t]
\centering
\includegraphics[width=0.95\linewidth]{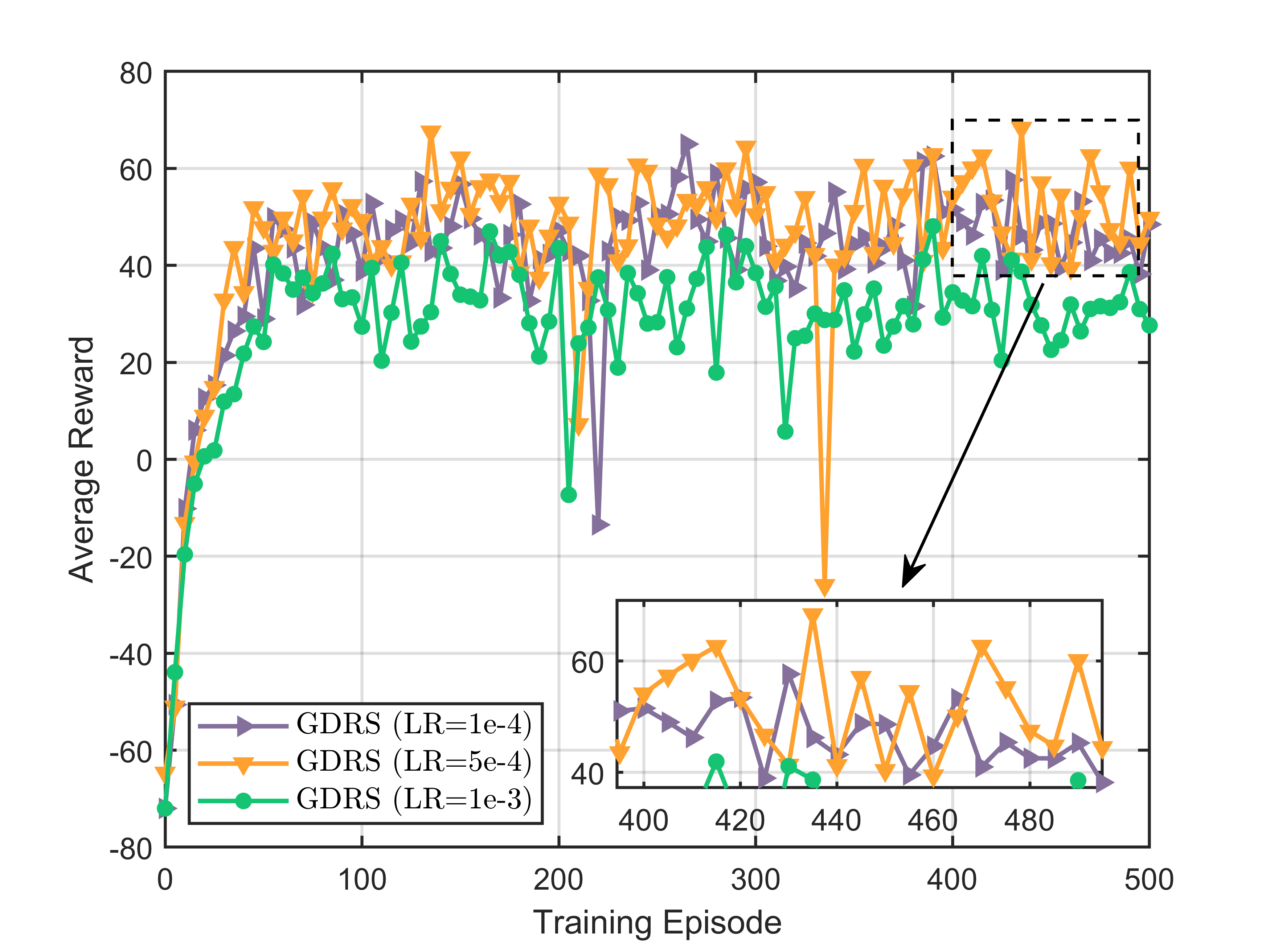}
\caption{The reward curves of the proposed GDRS algorithm under different learning rate.}
\label{convergence_different_learning_rate}
\end{figure}

\begin{figure}[!t]
\centering
\includegraphics[width=0.95\linewidth]{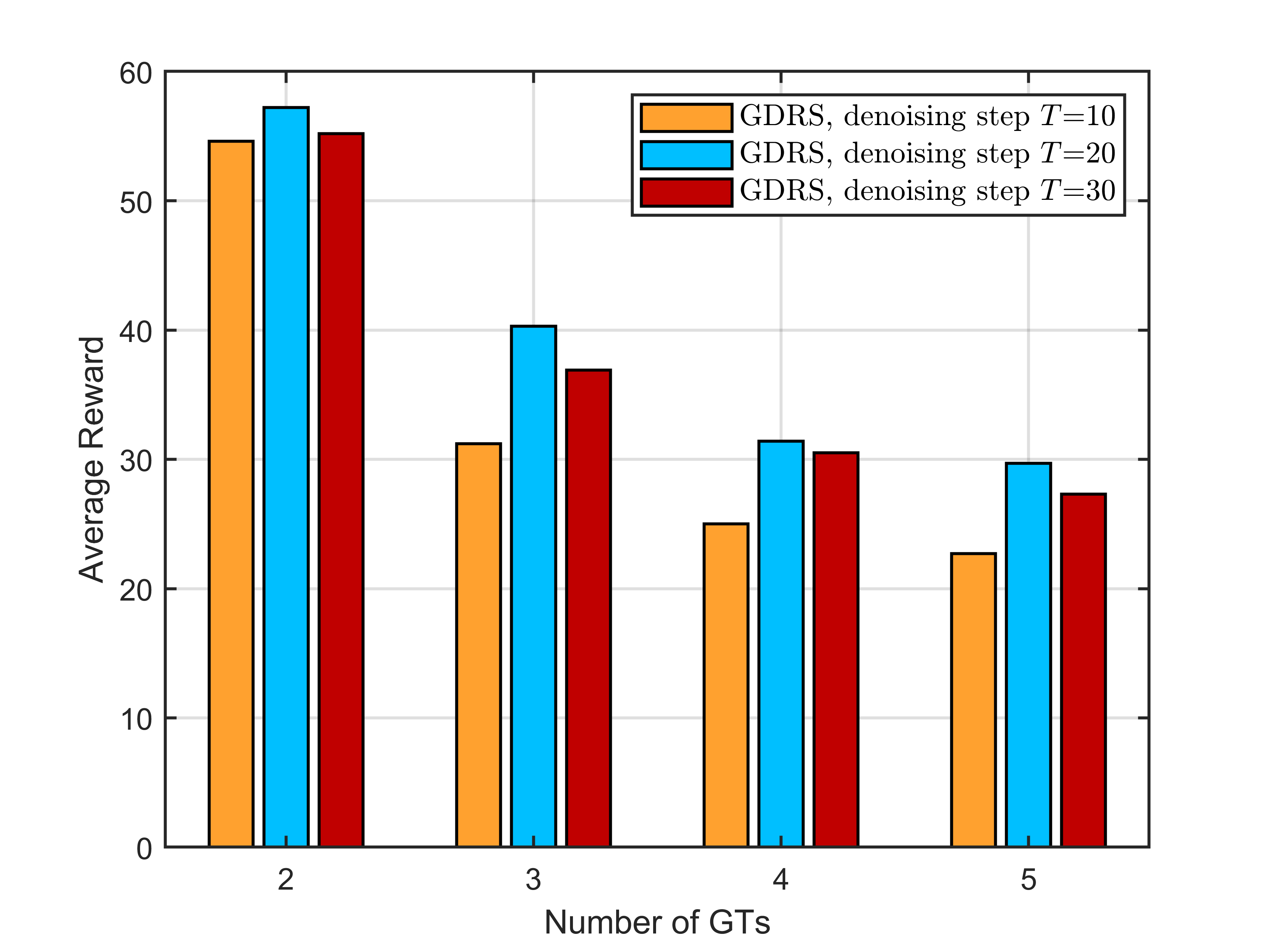}
\caption{The impact of diffusion steps $T$ on average reward under different number of GTs $K$.}
\label{reward_vs_diffusion_steps}
\end{figure}

Fig.~\ref{reward_vs_diffusion_steps} investigates how the number of diffusion steps $T$ affects the achieved average reward under different numbers of GTs $K$. First, we observe that the GDRS method with 20 denoising steps outperforms the version with 10 denoising steps, indicating that increasing the number of denoising steps helps the DRL agent learn more generalizable features, thereby enhancing training generalization. However, the GDRS method with 30 denoising steps yields a significantly lower average reward compared to the 20-step case. This is because an excessive number of denoising steps can remove valuable features from the data, reducing training efficiency. Furthermore, as the number of GTs increases, the average reward under all denoising configurations decreases noticeably. This is due to the increased UAV flight energy consumption caused by the larger number of GTs under a fixed computational capacity, leading to reduced energy efficiency.

\begin{figure}[!t]
\centering
\includegraphics[width=0.95\linewidth]{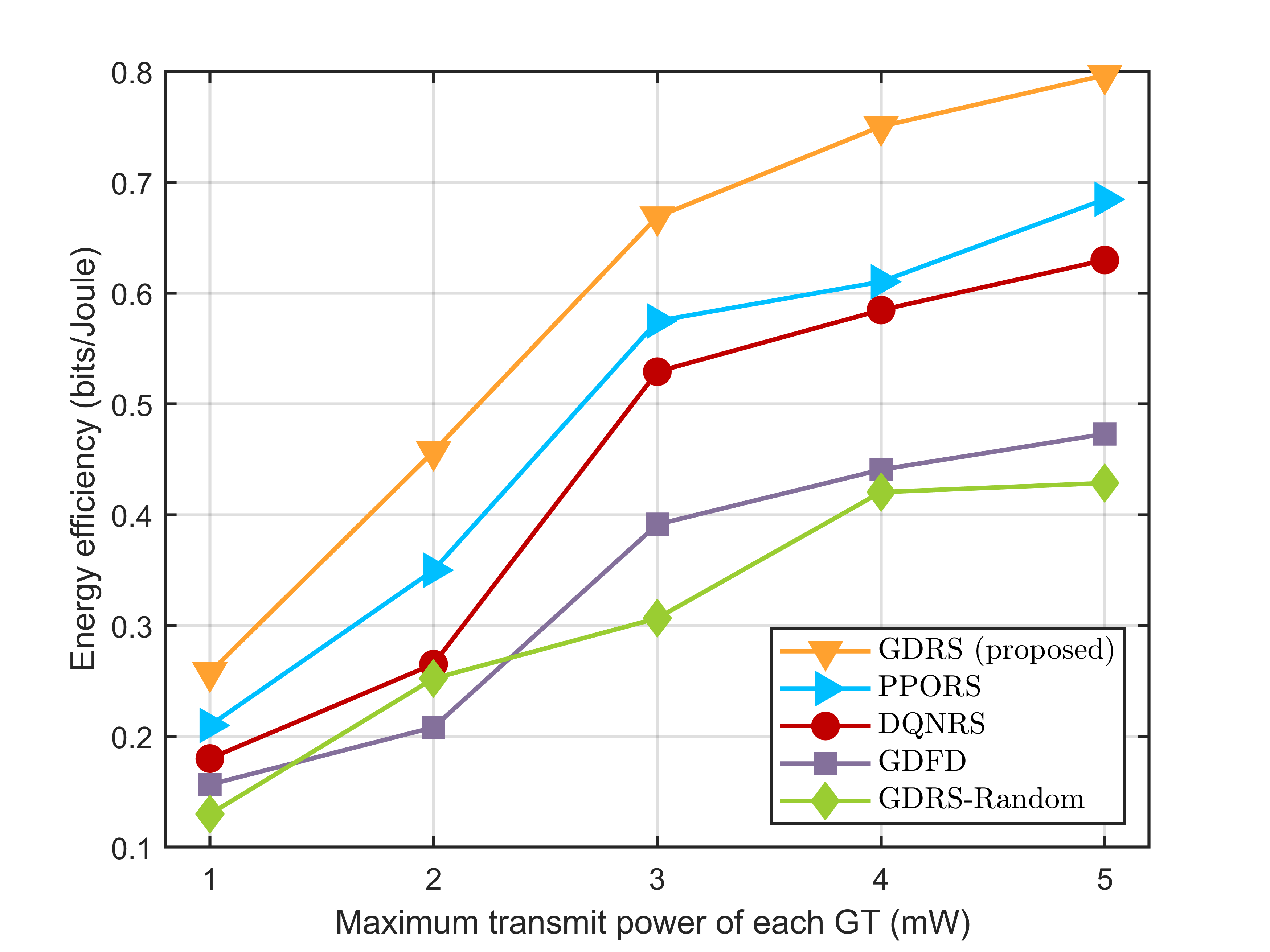}
\caption{Energy efficiency of different methods versus the maximum of transmit power of each GT.}
\label{EE_vs_transmit_power}
\end{figure}

Fig.~\ref{EE_vs_transmit_power} compares the energy efficiency of different methods versus the maximum transmit power of each GT. We find that all methods exhibit an upward trend as the maximum transmit power increases. This indicates that granting GTs higher transmit power generally enables more data bits to be offloaded and processed, and the growth in data bits outpaces the increase in UAV energy consumption. Moreover, we observe that the proposed GDRS method consistently achieves the highest energy efficiency across the entire power range, and the performance gap between GDRS and the baseline methods widens as the number of GTs increases. Additionally, we can observe that GDRS outperforms GDFD in terms of energy efficiency. This is because GDRS enables flexible interference management through successive interference cancellation, allowing simultaneous uplink transmissions over the shared bandwidth resource, whereas the GDFD method averages the bandwidth allocation, leading to reduced spectral efficiency and limited offloading performance.

\begin{figure}[!t]
\centering
\includegraphics[width=0.95\linewidth]{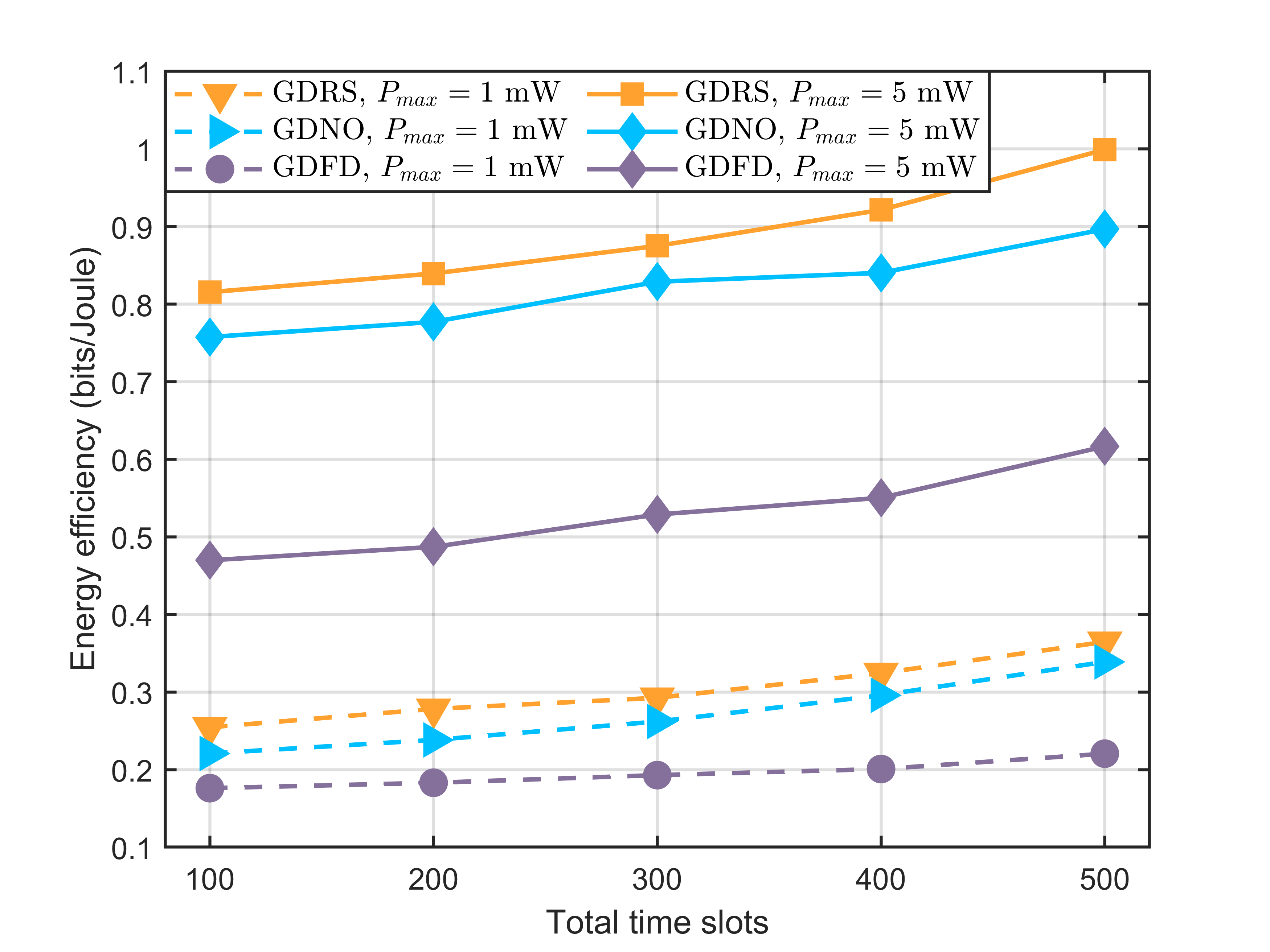}
\caption{Energy efficiency versus the total time slots of UAV under different multiple access methods.}
\label{EE_vs_total_time_slots}
\end{figure}

Fig.~\ref{EE_vs_total_time_slots} evaluates the energy efficiency versus the total number of time slots under different multiple access schemes. We compare the proposed RSMA-based system with representative NOMA-based scheme and FDMA-bassed scheme. First, all schemes exhibit an increase in energy efficiency as the time horizon expands. Intuitively, with more time slots, the UAV has greater flexibility to serve the offloading requests in a staggered manner and can fly in a more energy-conservative trajectory, thereby improving the overall energy efficiency. Besides, the RSMA-enabled system achieves the highest energy efficiency, significantly outperforming both the NOMA and OMA cases. As the number of time slots increases, the energy efficiency gap widens because RSMA can more fully exploit the extra time to optimize power allocation and decoding order among users. Since it still allows concurrent transmissions with SIC, the NOMA-based system is generally second-best here, but it suffers from higher interference and less flexible rate allocation than RSMA, resulting in lower efficiency. The OMA method performs the worst energy-wise due to its poor spectral utilization. By enabling part of the interference to be treated as useful information, RSMA strikes an advantageous balance that NOMA and OMA cannot, thus delivering better performance especially in high-load or extended-duration scenarios.

\begin{figure}[!t]
\centering
\includegraphics[width=0.95\linewidth]{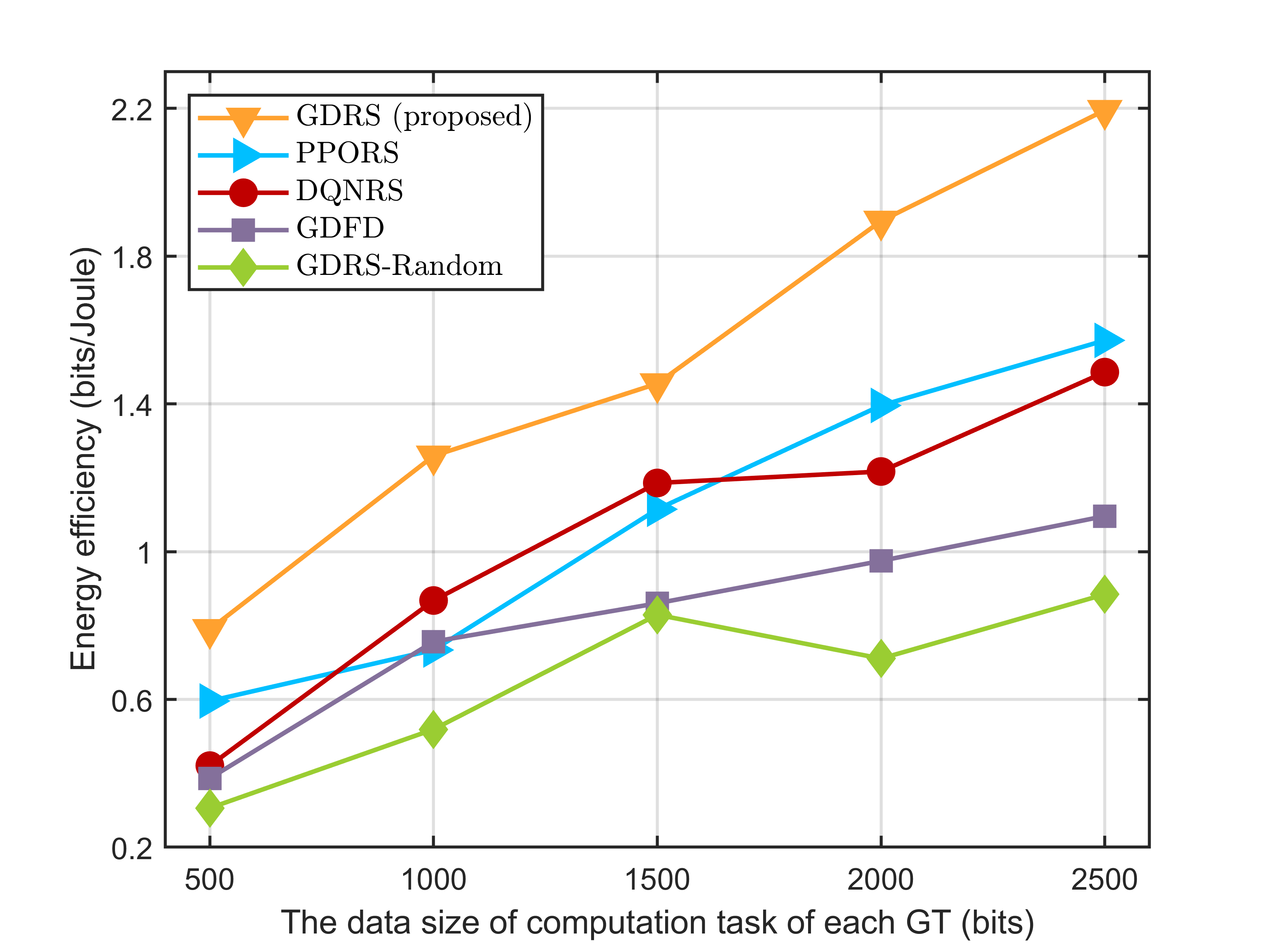}
\caption{Energy efficiency of different methods versus the data size of computation task of each GT.}
\label{EE_vs_data}
\end{figure}

In Fig.~\ref{EE_vs_data}, we analyze how the energy efficiency of the various methods varies with the size of computation task of GT. The plotted results indicate a general upward trend in energy efficiency as the task size grows. When tasks are very small, the overhead of coordination and UAV operation is relatively significant, leading to lower energy efficiency for all schemes. In this regime, GDRS still maintains a slight edge over other methods, but overall efficiency is limited by fixed costs. As the task size increases, all methods become more energy-efficient. This is because larger batches of data can be offloaded and processed in one go, amortizing the energy cost of UAV trajectory and link setup over more bits. Notably, the energy efficiency of GDRS method improves the most rapidly and to the highest values. And the margin between GDRS and the others widens with task size, indicating that GDRS scales especially well to heavy workloads. This is because GDRS method minimizes unnecessary UAV movement and efficiently multiplexes the uplink transmissions via RSMA.

\begin{figure}[!t]
\centering
\includegraphics[width=0.95\linewidth]{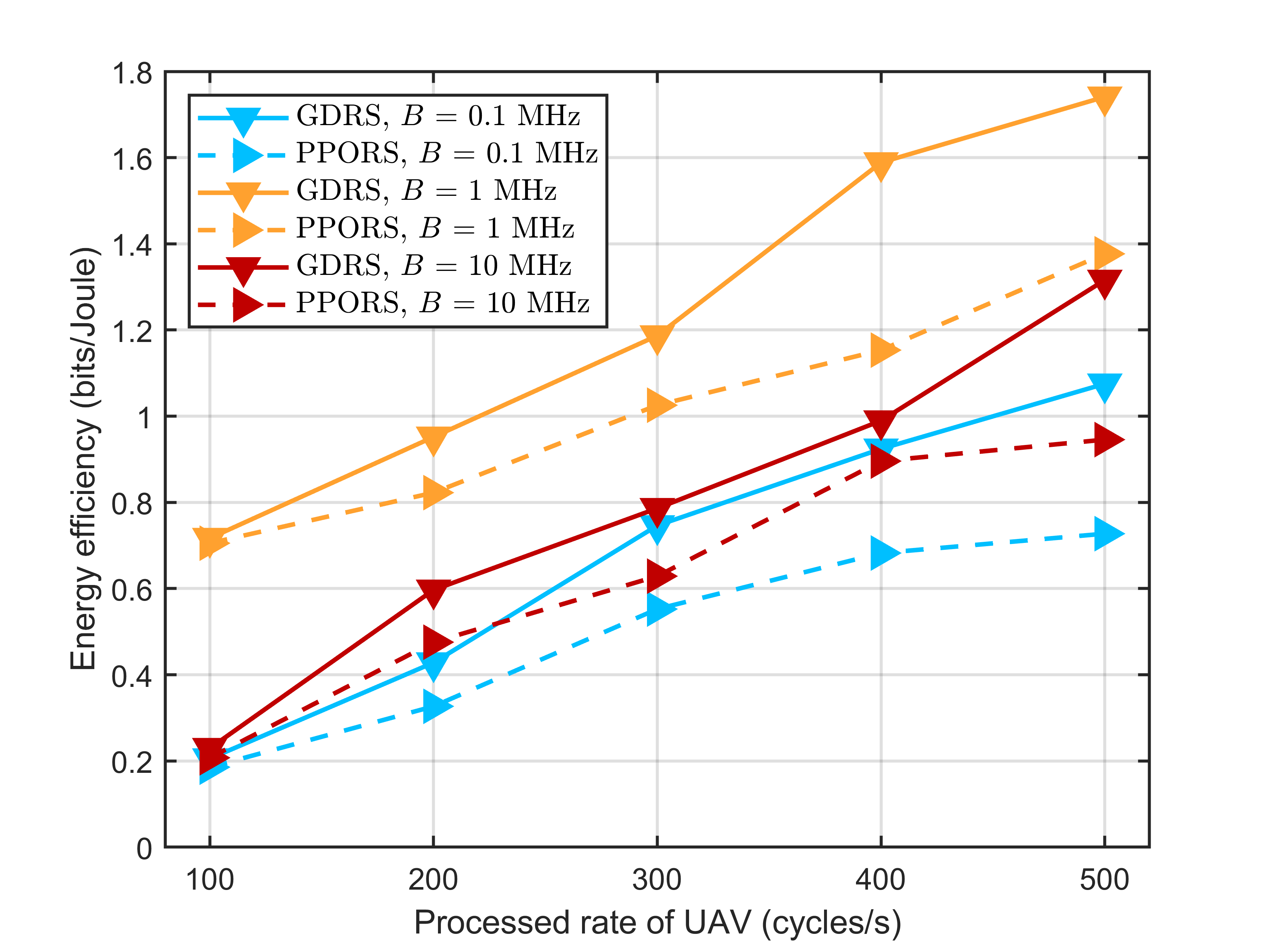}
\caption{Energy efficiency of GDRS and PPORS methods versus the processed rate of UAV under different communication bandwidth.}
\label{EE_vs_processed_data_of_UAV}
\end{figure}

Fig.~\ref{EE_vs_processed_data_of_UAV} plots the energy efficiency of GDRS versus the baseline PPORS against the processing rate of UAV under different bandwidth conditions. First, we observe that increasing the UAV's processing rate leads to higher energy efficiency for both GDRS and PPORS. With a faster onboard CPU, the UAV can execute offloaded tasks more quickly, reducing the time GTs spend transmitting. With a faster onboard CPU, the UAV can execute offloaded tasks more quickly, reducing the time GTs spend on data transmission. Moreover, the energy efficiency of GDRS consistently surpasses that of the PPORS method, demonstrating the advantage of the generative diffusion model in adapting to the complexity of low-altitude MEC offloading involving mobile UAV. In addition, under a fixed UAV processing rate, energy efficiency increases with communication bandwidth. When more bandwidth is available, uplink transmissions require less time or lower power for the same data volume, thereby improving the energy efficiency of both methods. Notably, even under constrained bandwidth conditions, GDRS outperforms the baseline. Specifically, when the UAV processing rate is set to $500$ cycles/s and the bandwidth to $0.1$ MHz, GDRS improves energy efficiency by approximately $47.3\%$ compared to PPORS, confirming that whether the bottleneck lies in computation or communication, GDRS achieves better energy utilization. 

\begin{figure}[!t]
\centering
\includegraphics[width=0.82\linewidth]{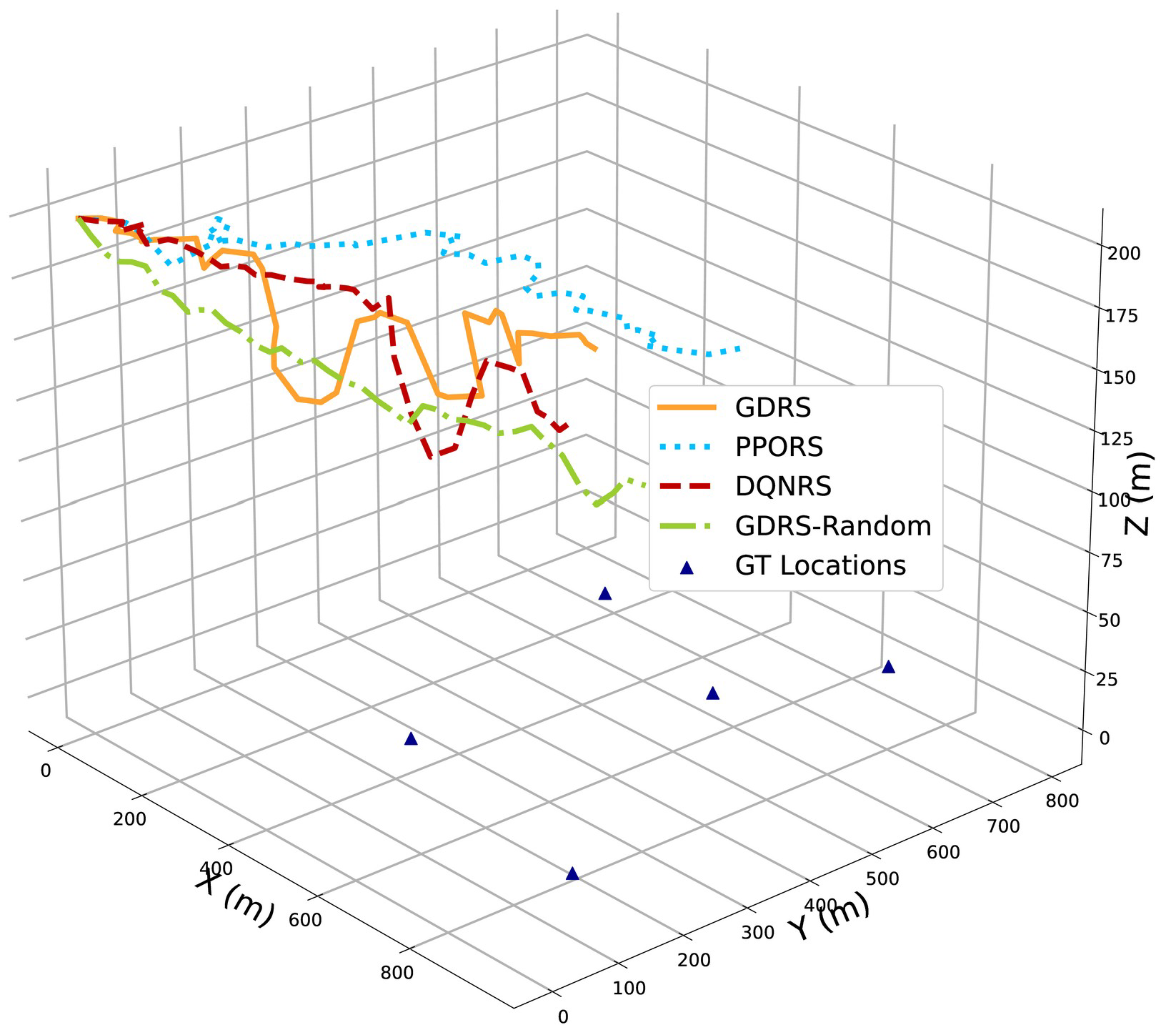}
\caption{UAV trajectories in 3D space for different methods.}
\label{UAV_trajectory}
\end{figure}

Fig.~\ref{UAV_trajectory} illustrates the UAV 3D trajectories under different methods. We observe that the proposed GDRS method yields a more structured and energy-aware trajectory that dynamically adapts to the spatial distribution of GTs, enabling efficient task offloading while minimizing redundant UAV motion. In contrast, the trajectories under PPORS and DQNRS exhibit erratic and less adaptive patterns, reflecting the limited exploration and poor generalization capability of traditional DRL approaches in high-dimensional hybrid action spaces. Additionally, the GDRS-Random method, which employs a stochastic RSMA decoding order, results in a degraded trajectory with inefficient hovering and detours, further highlighting the necessity of the decoding policy derived in Proposition~\ref{proposition_decoding_order}.

\section{Conclusion}\label{sec_conclusion}

In this paper, we investigated an uplink RSMA-enabled low-altitude MEC system to enhance energy efficiency under dynamic and interference-limited conditions. A novel generative diffusion model-enhanced DRL framework was proposed to jointly optimize UAV 3D trajectory, task offloading decisions, and power allocation. In addition, we analytically derive an optimal decoding order policy for the RSMA-enabled systems. Simulation results demonstrated that integrating RSMA with generative policy learning significantly improves energy efficiency and scalability in low-altitude MEC systems, particularly under dense user deployments and stringent resource constraints. Moreover, the diffusion model enhances policy exploration by generating structured actions that adapt to interference dynamics and user distributions, enabling more effective and energy-aware UAV behavior. However, this work assumes perfect knowledge of channel state information and GT locations, which may not be realistic in practical deployments. Therefore, future work will explore adaptive optimization schemes that are robust to imperfect channel conditions and mobile GT scenarios.



\bibliographystyle{IEEEtran}
\bibliography{references,references_r1}

\begin{thebibliography}{10}
\providecommand{\url}[1]{#1}
\csname url@samestyle\endcsname
\providecommand{\newblock}{\relax}
\providecommand{\bibinfo}[2]{#2}
\providecommand{\BIBentrySTDinterwordspacing}{\spaceskip=0pt\relax}
\providecommand{\BIBentryALTinterwordstretchfactor}{4}
\providecommand{\BIBentryALTinterwordspacing}{\spaceskip=\fontdimen2\font plus
\BIBentryALTinterwordstretchfactor\fontdimen3\font minus
  \fontdimen4\font\relax}
\providecommand{\BIBforeignlanguage}[2]{{%
\expandafter\ifx\csname l@#1\endcsname\relax
\typeout{** WARNING: IEEEtran.bst: No hyphenation pattern has been}%
\typeout{** loaded for the language `#1'. Using the pattern for}%
\typeout{** the default language instead.}%
\else
\language=\csname l@#1\endcsname
\fi
#2}}
\providecommand{\BIBdecl}{\relax}
\BIBdecl

\bibitem{8792135}
E.~Calvanese~Strinati, S.~Barbarossa, J.~L. Gonzalez-Jimenez, D.~Ktenas,
  N.~Cassiau, L.~Maret, and C.~Dehos, ``{6G: The Next Frontier: From
  Holographic Messaging to Artificial Intelligence Using Subterahertz and
  Visible Light Communication},'' \emph{IEEE Vehicular Technology Magazine},
  vol.~14, no.~3, pp. 42--50, 2019.

\bibitem{8391395}
P.~Porambage, J.~Okwuibe, M.~Liyanage, M.~Ylianttila, and T.~Taleb, ``{Survey
  on Multi-Access Edge Computing for Internet of Things Realization},''
  \emph{IEEE Communications Surveys \& Tutorials}, vol.~20, no.~4, pp.
  2961--2991, 2018.

\bibitem{9366869}
L.~Nadeem, M.~A. Azam, Y.~Amin, M.~A. Al-Ghamdi, K.~K. Chai, M.~F.~N. Khan, and
  M.~A. Khan, ``{Integration of D2D, Network Slicing, and MEC in 5G Cellular
  Networks: Survey and Challenges},'' \emph{IEEE Access}, vol.~9, pp.
  37\,590--37\,612, 2021.

\bibitem{10955337}
Y.~Jiang, X.~Li, G.~Zhu, H.~Li, J.~Deng, K.~Han, C.~Shen, Q.~Shi, and R.~Zhang,
  ``{Integrated Sensing and Communication for Low Altitude Economy:
  Opportunities and Challenges},'' \emph{IEEE Communications Magazine}, pp.
  1--7, 2025.

\bibitem{zhao2025generative}
C.~Zhao, J.~Wang, R.~Zhang, D.~Niyato, G.~Sun, H.~Du, D.~I. Kim, and
  A.~Jamalipour, ``{Generative AI-enabled Wireless Communications for Robust
  Low-altitude Economy Networking},'' \emph{arXiv preprint arXiv:2502.18118},
  2025.

\bibitem{7932157}
S.~Jeong, O.~Simeone, and J.~Kang, ``{Mobile Edge Computing via a UAV-Mounted
  Cloudlet: Optimization of Bit Allocation and Path Planning},'' \emph{IEEE
  Transactions on Vehicular Technology}, vol.~67, no.~3, pp. 2049--2063, 2018.

\bibitem{9831440}
Y.~Mao, O.~Dizdar, B.~Clerckx, R.~Schober, P.~Popovski, and H.~V. Poor,
  ``{Rate-Splitting Multiple Access: Fundamentals, Survey, and Future Research
  Trends},'' \emph{IEEE Communications Surveys \& Tutorials}, vol.~24, no.~4,
  pp. 2073--2126, 2022.

\bibitem{10813419}
X.~Wang, H.~Du, L.~Feng, D.~Niyato, F.~Zhou, Z.~Yang, and W.~Li, ``{Resource
  Allocation and User Pairing for Rate Splitting Multiple Access Based Wireless
  Networked Control Systems},'' \emph{IEEE Transactions on Communications}, pp.
  1--1, 2024.

\bibitem{10059199}
H.~Bastami, H.~Behroozi, M.~Moradikia, A.~Abdelhadi, D.~W.~K. Ng, and L.~Hanzo,
  ``{Large-Scale Rate-Splitting Multiple Access in Uplink UAV Networks:
  Effective Secrecy Throughput Maximization Under Limited Feedback Channel},''
  \emph{IEEE Transactions on Vehicular Technology}, vol.~72, no.~7, pp.
  9267--9280, 2023.

\bibitem{10759093}
C.~Zhang, G.~Sun, J.~Li, Q.~Wu, J.~Wang, D.~Niyato, and Y.~Liu,
  ``{Multi-Objective Aerial Collaborative Secure Communication Optimization via
  Generative Diffusion Model-Enabled Deep Reinforcement Learning},'' \emph{IEEE
  Transactions on Mobile Computing}, vol.~24, no.~4, pp. 3041--3058, 2025.

\bibitem{10192095}
C.~Fang, Z.~Hu, X.~Meng, S.~Tu, Z.~Wang, D.~Zeng, W.~Ni, S.~Guo, and Z.~Han,
  ``{DRL-Driven Joint Task Offloading and Resource Allocation for
  Energy-Efficient Content Delivery in Cloud-Edge Cooperation Networks},''
  \emph{IEEE Transactions on Vehicular Technology}, vol.~72, no.~12, pp.
  16\,195--16\,207, 2023.

\bibitem{you2025reacritic}
F.~You and H.~Du, ``{ReaCritic: Large Reasoning Transformer-based DRL
  Critic-model Scaling For Heterogeneous Networks},'' \emph{arXiv preprint
  arXiv:2505.10992}, 2025.

\bibitem{10529221}
H.~Du, R.~Zhang, Y.~Liu, J.~Wang, Y.~Lin, Z.~Li, D.~Niyato, J.~Kang, Z.~Xiong,
  S.~Cui, B.~Ai, H.~Zhou, and D.~I. Kim, ``{Enhancing Deep Reinforcement
  Learning: A Tutorial on Generative Diffusion Models in Network
  Optimization},'' \emph{IEEE Communications Surveys \& Tutorials}, vol.~26,
  no.~4, pp. 2611--2646, 2024.

\bibitem{wang2024generative}
X.~Wang, H.~Du, D.~Niyato, L.~Zhou, L.~Feng, Z.~Yang, F.~Zhou, and W.~Li,
  ``{Generative AI Enabled Matching for 6G Multiple Access},'' \emph{arXiv
  preprint arXiv:2411.04137}, 2024.

\bibitem{10812969}
X.~Xu, X.~Mu, Y.~Liu, H.~Xing, Y.~Liu, and A.~Nallanathan, ``{Generative
  Artificial Intelligence for Mobile Communications: A Diffusion Model
  Perspective},'' \emph{IEEE Communications Magazine}, pp. 1--8, 2024.

\bibitem{you2025dress}
F.~You, H.~Du, X.~Hou, Y.~Ren, and K.~Huang, ``{DRESS: Diffusion
  Reasoning-based Reward Shaping Scheme For Intelligent Networks},''
  \emph{arXiv preprint arXiv:2503.07433}, 2025.

\bibitem{9814972}
Z.~Yang, S.~Bi, and Y.-J.~A. Zhang, ``{Dynamic Offloading and Trajectory
  Control for UAV-Enabled Mobile Edge Computing System With Energy Harvesting
  Devices},'' \emph{IEEE Transactions on Wireless Communications}, vol.~21,
  no.~12, pp. 10\,515--10\,528, 2022.

\bibitem{9384267}
Z.~Hu, F.~Zeng, Z.~Xiao, B.~Fu, H.~Jiang, and H.~Chen, ``{Computation
  Efficiency Maximization and QoE-Provisioning in UAV-Enabled MEC Communication
  Systems},'' \emph{IEEE Transactions on Network Science and Engineering},
  vol.~8, no.~2, pp. 1630--1645, 2021.

\bibitem{9672750}
B.~Liu, Y.~Wan, F.~Zhou, Q.~Wu, and R.~Q. Hu, ``{Resource Allocation and
  Trajectory Design for MISO UAV-Assisted MEC Networks},'' \emph{IEEE
  Transactions on Vehicular Technology}, vol.~71, no.~5, pp. 4933--4948, 2022.

\bibitem{10546306}
Y.~Wang, J.~Zhu, H.~Huang, and F.~Xiao, ``{Bi-Objective Ant Colony Optimization
  for Trajectory Planning and Task Offloading in UAV-Assisted MEC Systems},''
  \emph{IEEE Transactions on Mobile Computing}, vol.~23, no.~12, pp.
  12\,360--12\,377, 2024.

\bibitem{10411132}
J.~Feng, X.~Liu, Z.~Liu, and T.~S. Durrani, ``{Optimal Trajectory and Resource
  Allocation for RSMA-UAV Assisted IoT Communications},'' \emph{IEEE
  Transactions on Vehicular Technology}, vol.~73, no.~6, pp. 8693--8704, 2024.

\bibitem{9585491}
H.~Bastami, M.~Letafati, M.~Moradikia, A.~Abdelhadi, H.~Behroozi, and L.~Hanzo,
  ``{On the Physical Layer Security of the Cooperative Rate-Splitting-Aided
  Downlink in UAV Networks},'' \emph{IEEE Transactions on Information Forensics
  and Security}, vol.~16, pp. 5018--5033, 2021.

\bibitem{9151975}
R.~Han, Y.~Wen, L.~Bai, J.~Liu, and J.~Choi, ``{Rate Splitting on Mobile Edge
  Computing for UAV-Aided IoT Systems},'' \emph{IEEE Transactions on Cognitive
  Communications and Networking}, vol.~6, no.~4, pp. 1193--1203, 2020.

\bibitem{10742902}
Q.~Zhou, Y.~Liu, H.~Feng, and L.~Wang, ``{PRU Group Allocation and Dynamic
  Rate-splitting Design for Power Minimization in IRS-assisted UAV MEC Systems
  With RSMA},'' \emph{IEEE Transactions on Green Communications and
  Networking}, pp. 1--1, 2024.

\bibitem{kim2023energy}
J.~Kim, E.~Hong, J.~Jung, J.~Kang, and S.~Jeong, ``{Energy Minimization in
  Reconfigurable Intelligent Surface-assisted Unmanned Aerial Vehicle-enabled
  Wireless Powered Mobile Edge Computing Systems with Rate-splitting Multiple
  Access},'' \emph{Drones}, vol.~7, no.~12, p. 688, 2023.

\bibitem{9968236}
J.~Chen, X.~Cao, P.~Yang, M.~Xiao, S.~Ren, Z.~Zhao, and D.~O. Wu, ``{Deep
  Reinforcement Learning Based Resource Allocation in Multi-UAV-Aided MEC
  Networks},'' \emph{IEEE Transactions on Communications}, vol.~71, no.~1, pp.
  296--309, 2023.

\bibitem{10841385}
J.~Liu, M.~Xiao, J.~Wen, J.~Kang, R.~Zhang, T.~Zhang, D.~Niyato, W.~Zhang, and
  Y.~Liu, ``{Optimizing Resource Allocation for Multi-Modal Semantic
  Communication in Mobile AIGC Networks: A Diffusion-Based Game Approach},''
  \emph{IEEE Transactions on Cognitive Communications and Networking}, pp.
  1--1, 2025.

\bibitem{zhao2025temporal}
C.~Zhao, R.~Zhang, J.~Wang, D.~Niyato, G.~Sun, H.~Du, Z.~Li, A.~Jamalipour, and
  D.~I. Kim, ``{Temporal Spectrum Cartography in Low-Altitude Economy Networks:
  A Generative AI Framework with Multi-Agent Learning},'' \emph{arXiv preprint
  arXiv:2505.15571}, 2025.

\bibitem{zhang2025improve}
X.~Zhang and J.~Yu, ``{Improve the Training Efficiency of DRL for Wireless
  Communication Resource Allocation: The Role of Generative Diffusion
  Models},'' \emph{arXiv preprint arXiv:2502.07211}, 2025.

\bibitem{10852212}
X.~Wang, H.~Du, L.~Feng, F.~Zhou, and W.~Li, ``{Effective Throughput
  Maximization for NOMA-Enabled URLLC Transmission in Industrial IoT Systems: A
  Generative AI-Based Approach},'' \emph{IEEE Internet of Things Journal},
  vol.~12, no.~10, pp. 13\,327--13\,339, 2025.

\bibitem{10892236}
J.~Zhang, Z.~Liu, X.~Feng, H.~Yang, and S.~Liang, ``{Enhanced Secure
  Beamforming for IRS-Assisted IoT Communication Using a
  Generative-Diffusion-Model-Enabled Optimization Approach},'' \emph{IEEE
  Internet of Things Journal}, vol.~12, no.~10, pp. 13\,398--13\,414, 2025.

\bibitem{10472660}
J.~Wang, H.~Du, D.~Niyato, J.~Kang, Z.~Xiong, D.~Rajan, S.~Mao, and X.~Shen,
  ``{A Unified Framework for Guiding Generative AI With Wireless Perception in
  Resource Constrained Mobile Edge Networks},'' \emph{IEEE Transactions on
  Mobile Computing}, vol.~23, no.~11, pp. 10\,344--10\,360, 2024.

\bibitem{8883173}
H.~Mei, K.~Wang, D.~Zhou, and K.~Yang, ``{Joint Trajectory-Task-Cache
  Optimization in UAV-Enabled Mobile Edge Networks for Cyber-Physical
  System},'' \emph{IEEE Access}, vol.~7, pp. 156\,476--156\,488, 2019.

\bibitem{6863654}
A.~Al-Hourani, S.~Kandeepan, and S.~Lardner, ``{Optimal LAP Altitude for
  Maximum Coverage},'' \emph{IEEE Wireless Communications Letters}, vol.~3,
  no.~6, pp. 569--572, 2014.

\bibitem{8647789}
Y.~Du, K.~Wang, K.~Yang, and G.~Zhang, ``{Energy-Efficient Resource Allocation
  in UAV Based MEC System for IoT Devices},'' in \emph{2018 IEEE Global
  Communications Conference (GLOBECOM)}, 2018, pp. 1--6.

\bibitem{9257190}
Z.~Yang, M.~Chen, W.~Saad, W.~Xu, and M.~Shikh-Bahaei, ``{Sum-Rate Maximization
  of Uplink Rate Splitting Multiple Access (RSMA) Communication},'' \emph{IEEE
  Transactions on Mobile Computing}, vol.~21, no.~7, pp. 2596--2609, 2022.

\bibitem{9240028}
J.~Zhu, Y.~Huang, J.~Wang, K.~Navaie, and Z.~Ding, ``{Power Efficient
  IRS-Assisted NOMA},'' \emph{IEEE Transactions on Communications}, vol.~69,
  no.~2, pp. 900--913, 2021.

\bibitem{9203956}
M.~Zeng, X.~Li, G.~Li, W.~Hao, and O.~A. Dobre, ``{Sum Rate Maximization for
  IRS-Assisted Uplink NOMA},'' \emph{IEEE Communications Letters}, vol.~25,
  no.~1, pp. 234--238, 2021.

\bibitem{8892657}
J.~Zhang, L.~Zhu, Z.~Xiao, X.~Cao, D.~O. Wu, and X.-G. Xia, ``{Optimal and
  Sub-Optimal Uplink NOMA: Joint User Grouping, Decoding Order, and Power
  Control},'' \emph{IEEE Wireless Communications Letters}, vol.~9, no.~2, pp.
  254--257, 2020.

\bibitem{9013924}
M.~A. Abd-Elmagid, A.~Ferdowsi, H.~S. Dhillon, and W.~Saad, ``{Deep
  Reinforcement Learning for Minimizing Age-of-Information in UAV-Assisted
  Networks},'' in \emph{2019 IEEE Global Communications Conference (GLOBECOM)},
  2019, pp. 1--6.

\bibitem{nichol2021improved}
A.~Q. Nichol and P.~Dhariwal, ``{Improved Denoising Diffusion Probabilistic
  Models},'' in \emph{International conference on machine learning}.\hskip 1em
  plus 0.5em minus 0.4em\relax PMLR, 2021, pp. 8162--8171.

\bibitem{ho2020denoising}
``{Denoising Diffusion Probabilistic Models}, author={Ho, Jonathan and Jain,
  Ajay and Abbeel, Pieter},'' \emph{Advances in neural information processing
  systems}, vol.~33, pp. 6840--6851, 2020.

\bibitem{10409284}
H.~Du, Z.~Li, D.~Niyato, J.~Kang, Z.~Xiong, H.~Huang, and S.~Mao,
  ``{Diffusion-Based Reinforcement Learning for Edge-Enabled AI-Generated
  Content Services},'' \emph{IEEE Transactions on Mobile Computing}, vol.~23,
  no.~9, pp. 8902--8918, 2024.

\bibitem{10778659}
Y.~Zhao, F.~Zhou, L.~Feng, Y.~Sun, W.~Li, W.~Y.~B. Lim, Z.~Xiong, S.~Mao, and
  Z.~Han, ``{Joint Deployment and Resource Allocation for Multi-AeBS Networks:
  A Two-Timescale Optimization Framework Using MADRL},'' \emph{IEEE
  Transactions on Communications}, pp. 1--1, 2024.

\end{thebibliography}


\end{document}